\newtheorem{theorem}{Theorem}
\newtheorem{lemma}{Lemma}
\newtheorem{observation}{Observation}
\theoremstyle{definition}
\newtheorem{definition}{Definition}
\DeclareMathOperator{\sol}{SOL}
\DeclareMathOperator{\alg}{ALG}
\DeclareMathOperator{\polylog}{polylog}
\DeclareMathOperator{\lca}{lca}
\DeclareMathOperator{\sur}{sur}
\DeclareMathOperator{\csur}{csur}
\title{Polylogarithmic Bounds on the Competitiveness of \\
Min-cost (Bipartite) Perfect Matching with Delays}
\author{
Yossi Azar\\
\texttt{azar@tau.ac.il}
\and
Ashish Chiplunkar\\
\texttt{ashish.chiplunkar@gmail.com}
\and
Haim Kaplan\\
\texttt{haimk@post.tau.ac.il}
}
\date{}
\begin{document}
\maketitle

\thispagestyle{empty}
\begin{abstract}
We consider the problem of online Min-cost Perfect Matching with Delays (MPMD) recently introduced by Emek et al, (STOC 2016). This problem is defined on an underlying $n$-point metric space. An adversary presents real-time requests online at points of the metric space, and the algorithm is required to match them, possibly after keeping them waiting for some time. The cost incurred is the sum of the distances between matched pairs of points (the connection cost), and the sum of the waiting times of the requests (the delay cost). We prove the first logarithmic upper bound and the first polylogarithmic lower bound on the randomized competitive ratio of this problem. We present an algorithm with a competitive ratio of $O(\log n)$, which improves the upper bound of $O(\log^2n+\log\Delta)$ of Emek et al, by removing the dependence on $\Delta$, the aspect ratio of the metric space (which can be unbounded as a function of $n$). The core of our algorithm is a deterministic algorithm for MPMD on metrics induced by edge-weighted trees of height $h$, whose cost is guaranteed to be at most $O(1)$ times the connection cost plus $O(h)$ times the delay cost of every feasible solution. The reduction from MPMD on arbitrary metrics to MPMD on trees is achieved using the result on embedding $n$-point metric spaces into distributions over weighted hierarchically separated trees of height $O(\log n)$, with distortion $O(\log n)$. We also prove a lower bound of $\Omega(\sqrt{\log n})$ on the competitive ratio of any randomized algorithm. This is the first lower bound which increases with $n$, and is attained on the metric of $n$ equally spaced points on a line.

The problem of Min-cost Bipartite Perfect Matching with Delays (MBPMD) is the same as MPMD except that every request is either positive or negative, and requests can be matched only if they have opposite polarity. We prove an upper bound of $O(\log n)$ and a lower bound of $\Omega(\log^{1/3}n)$ on the competitive ratio of MBPMD with a more involved analysis.
\end{abstract}


\section{Introduction}

The problem of finding a maximum / minimum weight (perfect) matching in an edge-weighted graph has been one of the central problems in algorithmic graph theory, and has been the topic of extensive research ever since the seminal work by Edmonds \cite{Edmonds_JRNBS65, Edmonds_CJM65}. Given a graph with positive edge weights, a (perfect) matching $M$ is a subset of edges such that no two edges in $M$ have a common endpoint (and every vertex is an endpoint of some edge in $M$), and its weight is the sum of the weights of the edges that it contains. The online version of the matching problem comes in numerous flavors, motivated by applications from a variety of domains. Some of the prominent lines of work, among the plethora of results on online matching, include min-weight perfect matchings with online vertex arrival \cite{KalyanasundaramP_JAlgorithms93, KhullerMV_TheorComputSci94, KoutsoupiasN_WAOA03, MeyersonNP_SODA06, GuptaL_ICALP12, BansalBGN_Algorithmica14}, max-cardinality or max-weight matchings with online vertex arrival \cite{KarpVV_STOC90, FeldmanMMM_FOCS09, BahmaniK_ESA10, AggarwalGKM_SODA11, MahdianY_STOC11, DevanurJ_STOC12, DevanurJK_SODA13}, and max-cardinality or max-weight matchings with online edge arrival \cite{McGregor_Approx05, Varadaraja_ICALP11, EpsteinLSW_STACS13}. We remark that the list of references is merely a tip of the iceberg of literature on online matchings.

A variant of the online matching problem, which gaming platforms such as chess.com, bridgebase.com, etc.\ face, is the following. Players log into the website, and express their desire to participate in multi-player gaming sessions. The platform is required to create \textit{tables} of an appropriate fixed number of players (two in case of chess, and four in case of bridge). The (dis)satisfaction experienced by a player is a combination of the time taken by the platform to assign her a table, and several player-dependent factors such as the differences between their ratings, their average time to move, etc. In order to improve participation, the platform has to run a table-assignment algorithm which attempts to minimize the dissatisfaction of the players. With this motivation, and restricting attention to two-player games, Emek et al.\ \cite{EmekKW_STOC16} defined the problem of Min-cost Perfect Matching with Delays (MPMD), which is general enough to have numerous other applications, such as finding roommates, carpooling, etc.

A close cousin of MPMD is the problem of Min-cost \textit{Bipartite} Perfect Matching with Delays (MBPMD), which is motivated from applications such as organ transplantation, transportation platforms like Uber, etc. Here, each player is of one of two types (say donor or acceptor, in case of organ transplantation), and we are required to pair up players of dissimilar types. The cost of the matching, as before, is determined by the waiting time of the players and the dissimilarity in the pairs.

\textbf{Problem definition (informal):} In the MPMD problem on an underlying $n$-point metric space, each point represents a \textit{type} of players, and the distance between two points, say $p$ and $q$, is the dissatisfaction of players of types $p$ and $q$ if they are paired up. We call this distance the \textit{connection cost}. The online input given to the algorithm is a real-time sequence of players and their types, which we will call requests, where each request is revealed to the algorithm only at its arrival time, when it is unaware of the future requests. The algorithm is required to create pairs of players after possibly keeping them waiting for some time. The objective of the algorithm is to minimize the connection cost plus the total waiting time of the requests, also called the \textit{delay cost}. The MBPMD problem is the same as MPMD, except that each request is either positive (i.e.\ a producer) or negative (i.e.\ a consumer) and the algorithm is required to pair up positive and negative requests. Apparently, there is no reduction between MPMD and MBPMD, though MBPMD appears to be harder, arguably.

We remark that the offline version of MPMD on a metric space $\mathcal{M}$, where the entire input is known in advance, trivially reduces to finding a minimum cost perfect matching in a set of points in the metric space $\mathcal{M}\times\mathbb{R}$, with the distance between two points $(p_1,t_1)$ and $(p_2,t_2)$ being the sum of the distance between $p_1$ and $p_2$ in $\mathcal{M}$, and $|t_1-t_2|$. Similarly, the offline version of MBPMD translates to finding a minimum cost perfect matching between two sets of points in $\mathcal{M}\times\mathbb{R}$, one given by the positive requests and the other by the negative requests.

\textbf{Competitive Analysis:} In a typical online problem, an input is given to an algorithm in pieces, and the algorithm is constrained to make irrevocable decisions while processing every piece. The popular technique used to measure the performance of an online algorithm is competitive analysis \cite{BorodinE}, where we prove bounds on its \textit{competitive ratio}. A (randomized) online algorithm is said to have a competitive ratio of $\alpha$, if on every possible input, the algorithm produces a solution with (expected) cost at most $\alpha$ times the cost of the optimum solution to the instance, plus a constant which is independent of the online input. We assume that the input is generated by an adversary, who knows the algorithm and can force it to incur a large cost, while the adversary itself is able to produce a much cheaper solution (with full knowledge of the future input). When the algorithm is randomized, we assume that the adversary is \textit{oblivious}, that is, it does not have access to the random choices made by the algorithm.

\textbf{Background:} Emek et al.\ \cite{EmekKW_STOC16} gave the first online algorithm for MPMD with a finite competitive ratio. Given an $n$-point metric space $\mathcal{M}$ with aspect ratio $\Delta$ (the ratio of the maximum distance to the minimum distance), they consider its embedding into a distribution over metrics given by hierarchically separated full binary trees, with distortion $O(\log n)$. They then give a randomized algorithm for hierarchically separated trees, and they bound its competitive ratio using an appropriately defined stochastic process which captures the behavior of the algorithm. This results in an algorithm for the original metric $\mathcal{M}$ having competitive ratio $O(\log^2n+\log\Delta)$. Emek et al.\ remark that a constant lower bound on the competitive ratio exists even on two-point metrics, since MPMD captures the ski-rental problem.

\textbf{Our contributions:} The results of Emek et al.\ \cite{EmekKW_STOC16} naturally raise the questions: whether the competitive ratio depends on the number of points, and whether it depends on the aspect ratio. We answer both of these questions in this paper. On the one hand, we prove that the competitive ratio can be made independent of the aspect ratio.  On the other hand, we also prove that the competitive ratio must depend on the number of points. Our contributions can be summarized as follows.
\begin{enumerate}
\item Deterministic $O(h)$-competitive algorithms for MPMD and MBPMD on metrics given by trees of height $h$. In particular, these are deterministic $O(1)$-competitive algorithms for uniform metrics. The algorithm of Emek et al.\ is randomized, and has an $\Omega(\log n)$ competitive ratio even on uniform metrics.
\item $O(\log n)$-competitive algorithms for MPMD and MBPMD on arbitrary $n$-point metrics. This improves the bound of $O(\log^2n+\log\Delta)$ by Emek et al, by removing dependence on $\Delta$, the aspect ratio of the metric space, which can potentially be unbounded as a function of $n$.
\item Lower-bound constructions which prove that on $n$-point metrics, the competitive ratio of any randomized algorithm for MPMD must be $\Omega(\sqrt{\log n})$, and the competitive ratio of any randomized algorithm for for MBPMD must be $\Omega(\log^{1/3}n)$. These are the first lower bounds which increase with $n$, and the former confirms the conjecture by Emek et al. 
\end{enumerate}
Our deterministic algorithm for MPMD (resp.\ MBPMD) on trees is a simple algorithm which maintains one timer $z_u$ (resp.\ two timers $z^+_u$ and $z^-_u$) for every vertex $u$ of the tree, that measures the amount of time for which the subtree rooted at $u$ had an odd number of pending requests (resp.\ the time-integrals of the ``surplus'' and the ``deficiency'' in the subtree rooted at $u$). We reduce M(B)PMD on arbitrary metrics to M(B)PMD on trees by using the technique of randomized embedding \cite{FakcharoenpholRT_JCSS04} followed by a height reduction step \cite{BansalBMN_JACM15}. Although the technique guarantees an embedding into a (weighted) hierarchically separated tree, we do not need the hierarchical separation condition; but only that the height of the tree is small. It is believable that the competitive ratio of any algorithm which uses an embedding technique must be bounded from below by the distortion of the embedding. Our algorithm is, therefore, an optimal tree-embedding based algorithm, since it is known that there exist metric spaces which do not embed into distribution over tree metrics with distortion $o(\log n)$ (Theorem 9 of \cite{Bartal_FOCS96}).

In contrast to Emek et al, we do not need to embed metrics into binary trees, and the height of our trees is $O(\log n)$, independent of the aspect ratio $\Delta$. Moreover, our algorithm for MPMD on tree metrics is deterministic and has a relatively simple proof of competitiveness, whereas the previous algorithm is randomized and has a fairly involved analysis.

Our lower bounds are achieved on the metric space of $n$ equally spaced points in the unit interval. We invoke Yao's min-max technique \cite{BorodinE_InfComput99, StougieV_OperResLett02, Yao_FOCS77} and give a probability distribution over inputs, which defeats every deterministic online algorithm by a factor of $\Omega(\sqrt{\log n})$ in case of MPMD, and $\Omega(\log^{1/3}n)$ in case of MBPMD. 

\textbf{Extensions:} Emek et al.\ \cite{EmekKW_STOC16} also analyze a variant, called MPMDfp, where requests can be cleared at a fixed cost. They give a reduction from MPMDfp on a metric space $\mathcal{M}$ to MPMD on an appropriately defined metric space containing two copies of $\mathcal{M}$, and show that this only introduces a factor of $2$ in the competitive ratio. The same reduction, along with our algorithm for MPMD, results in an $O(\log n)$ competitive algorithm for MPMDfp.

\textbf{Organization of the paper:} We first define the problems and the related terminology formally in Section \ref{sec_prelim}. Section \ref{sec_ub} is dedicated to proving the upper bounds, where we first state the embedding result and show how the distortion of the embedding and the competitive ratio on tree metrics translates to competitive ratio on arbitrary metrics. We then follow it up by our algorithms for MPMD and MBPMD on tree metrics. We prove the lower bound results in Section \ref{sec_lb}. We conclude by stating a few remarks and related open problems in Section \ref{sec_rem}.

\section{Preliminaries}\label{sec_prelim}

A \textit{metric space} $\mathcal{M}$ is a set $S$ equipped with a distance function $d:S\times S\longrightarrow\mathbb{R}^+$ such that $d(x,x)=0$ for all $x\in S$, $d(x,y)=d(y,x)$ for all $x,y\in S$, and $d(x,y)+d(y,z)\geq d(x,z)$ for all $x,y,z\in S$. The online problem of Min-cost Perfect Matching with Delays (MPMD) on a finite metric space $\mathcal{M}=(S,d)$, as defined in \cite{EmekKW_STOC16}, is the following. The metric space is an offline input to the algorithm. An online input instance $I$ over $S$ is a sequence of requests $\langle(p_i,t_i)\rangle_{i=1}^{m}$, where $m$ is even, each $p_i\in S$, and $t_1\leq t_2\leq\cdots\leq t_m$. The request $(p_i,t_i)$ is revealed at time $t_i$. The algorithm is required to output a perfect matching of requests in real time. For each pair $(i,j)$ of requests output by the algorithm at time $t$ (where $t\geq\max(t_i,t_j)$), the algorithm pays a connection cost of $d(p_i,p_j)$ and a delay cost of $(t-t_i)+(t-t_j)$. The offline connection cost of creating the pair $(i,j)$ is $d(p_i,p_j)$, and the offline delay cost is $|t_i-t_j|$. The offline cost of a perfect matching on $\{1,\ldots,m\}$ is the total connection cost and delay cost over all pairs in the matching. The optimal solution is a perfect matching with the minimum offline cost. In the problem of Min-cost Bipartite Perfect Matching with Delays (MBPMD), the $i^{\text{\tiny{th}}}$ request is $(p_i,b_i,t_i)$, where $p_i\in S$, $b_i\in\{+1,-1\}$, and $t_i$ is the arrival time. The algorithm is allowed to output the pair $(i,j)$ only if $b_ib_j=-1$, and incurs the same cost as in MPMD.

Although we quantify the performance of online algorithms by their competitive ratio, we need to define a more general notion of competitiveness, customized for M(B)PMD, for stating our intermediate results. Given 
an instance $I$ of M(B)PMD
and an arbitrary solution $\sol$ of $I$, let $\sol_d$ denote its connection cost with respect to the metric $d$, $\sol_t$ denote its delay cost, and (with a slight abuse of notation) $\sol$ denote its total cost. Given a randomized algorithm $\mathcal{A}$, let $\mathcal{A}(I)$ be the random variable denoting the algorithm's total cost on $I$.

\begin{definition}
A randomized online algorithm $\mathcal{A}$ for M(B)PMD on a metric space $\mathcal{M}=(S,d)$ is said to be $\alpha$-\textit{competitive} if for every instance $I$ on $S$ and every solution $\sol$ of $I$, $\mathbb{E}[\mathcal{A}(I)]\leq\alpha\times\sol$. More generally, the algorithm is said to be $(\beta,\gamma)$-\textit{competitive} if for every instance $I$ on $S$ and every solution $\sol$ of $I$, $\mathbb{E}[\mathcal{A}(I)]\leq\beta\times\sol_d+\gamma\times\sol_t$.\footnote{To prove that an algorithm is $\alpha$-competitive, it suffices to compare its cost with the cost of the optimum solution. In contrast, to prove that the algorithm is $(\beta,\gamma)$-competitive, we have to compare its performance with that of every solution.}
\end{definition}

Note that an $\alpha$-competitive algorithm is trivially $(\alpha,\alpha)$-competitive, and a $(\beta,\gamma)$-competitive algorithm is trivially $(\max(\beta,\gamma))$-competitive.

A key ingredient in our algorithm is the technique of embedding metrics into distributions over tree metrics with low distortion. We define these notions formally.

\begin{definition}
Let $\mathcal{M}=(S,d)$ be a finite metric space, and let $\mathcal{D}$ be a probability distribution over metrics on a finite set $S'$. We say that $\mathcal{M}$ \textit{embeds} into $\mathcal{D}$ if $S\subseteq S'$, and for every $x,y\in S$ and every metric space $\mathcal{M}'=(S',d')$ in the support of $\mathcal{D}$, we have $d(x,y)\leq d'(x,y)$. The \textit{distortion} of this embedding is defined to be
\[\mu=\max_{x,y\in S\text{, }x\neq y}\frac{\mathbb{E}_{\mathcal{M'}=(S',d')\sim\mathcal{D}}[d'(x,y)]}{d(x,y)}\]
\end{definition}

\section{The $O(\log n)$ Upper bound}\label{sec_ub}

Our focus in this section is to give algorithms for MPMD and MBPMD on arbitrary metrics, and thus, to prove the following result.

\begin{theorem}\label{thm_ub}
There exist randomized online algorithms with a competitive ratio of $O(\log n)$ for MPMD and MBPMD on $n$-point metric spaces.
\end{theorem}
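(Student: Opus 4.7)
The plan is to combine two ingredients: a randomized embedding of the input metric into a distribution over trees of small height, and a deterministic online algorithm for M(B)PMD on trees whose cost splits cleanly into a constant multiple of the connection cost plus an $O(h)$ multiple of the delay cost. First, I would invoke the embedding, obtained by composing the Fakcharoenphol--Rao--Talwar construction with the Bansal--Buchbinder--Madry--Naor height-reduction step, to embed any $n$-point metric $(S,d)$ into a distribution $\mathcal{D}$ over tree metrics on a superset $S'$, with height $h = O(\log n)$ and distortion $\mu = O(\log n)$. Given an instance $I$ on $S$, the reduction samples one tree metric $(S',d') \sim \mathcal{D}$ at the outset, lifts each incoming request to the corresponding point of $S'$, runs the deterministic tree algorithm online on the lifted instance, and outputs the induced matching. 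Since $d(x,y) \leq d'(x,y)$ for all $x,y \in S$, the cost of this output with respect to $d$ is upper bounded by its cost under $d'$, so it suffices to bound the latter.

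Next, I would establish the core technical ingredient: for every edge-weighted tree of height $h$, a deterministic online algorithm that is $(O(1), O(h))$-competitive for MPMD, and a separate one for MBPMD. For MPMD, the natural algorithm maintains a single timer $z_u$ at each vertex $u$ that accumulates the measure of time during which the subtree rooted at $u$ contains an odd number of unmatched requests; when $z_u$ reaches the weight of the edge from $u$ to its parent, the algorithm ``opens'' $u$ upward, propagating the parity and greedily matching requests bottom-up. For MBPMD, the subtree state is an integer-valued imbalance, so I would maintain two timers $z^+_u, z^-_u$ tracking the time-integrated surplus and deficiency separately, triggering an upward merge whenever either crosses the parent-edge weight. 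In both cases the algorithm's connection cost is charged to the offline connection cost through the laminar structure of its matches, while the algorithm's delay cost is charged, level by level, to the offline delay cost; with only $h$ levels contributing, this yields the $O(h)$ factor on the delay.

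Combining the two pieces gives the theorem. Fix any instance $I$ and any offline solution $\sol$ of $I$, and let $\sol'$ denote the same pairing of requests evaluated under the sampled tree metric $d'$. Then $\mathbb{E}_{\mathcal{D}}[\sol'_{d'}] \leq \mu \cdot \sol_d = O(\log n)\cdot\sol_d$ by the distortion bound, while $\sol'_t = \sol_t$ since the delay cost depends only on arrival times. Applying the $(O(1), O(h))$-competitiveness of the tree algorithm to the lifted instance, and taking expectation over $\mathcal{D}$, the expected cost of the algorithm's output in $d'$ is at most $O(1)\cdot\mathbb{E}_{\mathcal{D}}[\sol'_{d'}] + O(h)\cdot\sol_t = O(\log n)\cdot(\sol_d + \sol_t) = O(\log n)\cdot\sol$. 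Since the $d$-cost of the output is dominated by its $d'$-cost, this yields the claimed $O(\log n)$ competitive ratio, simultaneously for MPMD and MBPMD.

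The main obstacle will be the analysis of the deterministic tree algorithm, especially for MBPMD. The MPMD case reduces to a clean parity-bit argument per subtree, so the charging of both connection and delay costs to any fixed offline matching follows a fairly direct laminar-decomposition scheme. In MBPMD, however, the per-subtree state is a signed integer that can oscillate, and the two timers $z^+_u, z^-_u$ may be driven independently by positive and negative arrivals; arguing that the algorithm's aggregate delay cost across levels can still be absorbed into $O(h)\cdot\sol_t$ will require a more delicate amortization or potential-function argument tailored to bipartite imbalances, and this is the step I expect to demand the most care.
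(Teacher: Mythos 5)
Your proposal follows the paper's approach exactly: compose the Fakcharoenphol--Rao--Talwar embedding with the Bansal et al.\ height reduction to get a distribution over trees of height $O(\log n)$ with distortion $O(\log n)$, build deterministic $(O(1),O(h))$-competitive tree algorithms driven by a parity timer $z_u$ for MPMD and surplus/deficiency timers $z^+_u, z^-_u$ for MBPMD, and combine via the same reduction lemma that turns $(\beta,\gamma)$-competitiveness on the tree into $(\mu\beta,\gamma)$-competitiveness on the original metric (your observation that $\sol'_t = \sol_t$ is exactly what makes the delay factor not multiply by $\mu$). You even correctly anticipate that the MBPMD charging requires a potential-function argument, which is what the paper does; the only part left unwritten is the full competitiveness analysis of the tree algorithms, which occupies Sections 3.2--3.3 and is the paper's technical core.
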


Analogous to the algorithm by Emek et al, our algorithms also exploit results on embedding arbitrary metrics into distributions over tree metrics.
In the subsequent subsections, we first show how the competitive ratio of an algorithm on an embedding metric space (tree metrics) translates into its competitive ratio on the embedded space. We then proceed to state the algorithm on tree metrics, and bound its competitive ratio.

\subsection{Reduction to Tree Metrics}\label{subsec_reduction}

The reduction of M(B)PMD on arbitrary metrics to M(B)PMD on tree metrics is achieved by the celebrated result of Fakcharoenphol et al.\ \cite{FakcharoenpholRT_JCSS04}, which gives an embedding of an arbitrary metric space into a distribution over hierarchically separated trees (HSTs) \cite{Bartal_FOCS96}. Informally, a $\sigma$-HST over a set $S$ has $S$ as its set of leaves, and the distance between any two points in $S$, under the HST metric, is determined by the level of their lowest common ancestor (LCA), with the root defined to be at the highest level. If the LCA is at a level $l$, then the distance is $d_l$, where $d_l\geq\sigma d_{l-1}$ ($\sigma$-hierarchical separation). Emek et al.\ used an embedding of HSTs into hierarchically separated binary trees (HSBTs), and design a randomized algorithm for HSBTs. Instead of this, we use a result by Bansal et al.\ \cite{BansalBMN_JACM15} to reduce the height of the tree to $O(\log n)$, since we design an algorithm for MPMD on tree metrics, with competitive ratio depending only on the height of the tree. While the height reduction step might lose the hierarchical separation, this is not a concern, since our algorithm works on arbitrary tree metrics. The overall embedding result that we need is stated in the following lemma, whose proof is deferred to Appendix \ref{app_a}.

\begin{lemma}\label{lem_embed}
Any $n$-point metric space $\mathcal{M}$ can be embedded, with distortion $O(\log n)$, into a distribution $\mathcal{D}$, supported on metrics induced by trees of height $O(\log n)$.
\end{lemma}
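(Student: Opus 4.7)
The plan is to chain two classical embedding results. First, invoke the theorem of Fakcharoenphol, Rao, and Talwar: every $n$-point metric $\mathcal{M}=(S,d)$ embeds with distortion $O(\log n)$ into a distribution $\mathcal{D}_0$ over $2$-hierarchically separated trees whose leaves are indexed by $S$, with each tree in the support dominating $d$ pointwise. The trees produced by FRT have height $\Theta(\log \Delta)$, where $\Delta$ is the aspect ratio of $\mathcal{M}$, which can be unbounded as a function of $n$; so FRT by itself does not give the height bound we need.

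Second, I would apply, to each tree $T_0$ in the support of $\mathcal{D}_0$, the height-reduction transformation of Bansal, Buchbinder, Madry, and Naor. Their result takes an $n$-leaf weighted rooted tree and returns another $n$-leaf weighted rooted tree of height $O(\log n)$ whose induced leaf-metric dominates the original and expands distances by at most a constant factor in expectation. The underlying idea is a recursive heavy-path/centroid decomposition that breaks the input tree into logarithmically-deep ``chunks'' and re-glues them through a balanced auxiliary tree, choosing edge weights so that root-to-leaf distances grow by only a constant factor in expectation.

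Composing the two steps produces a distribution $\mathcal{D}$ supported on trees of height $O(\log n)$. Because each of the two embeddings is non-contracting, the composed leaf metric still dominates $d$ pointwise; by linearity of expectation and the two distortion bounds, for every $x,y \in S$,
\[
\mathbb{E}_{(S',d')\sim\mathcal{D}}[d'(x,y)] \;\leq\; O(1)\cdot\mathbb{E}_{T_0\sim\mathcal{D}_0}[d_{T_0}(x,y)] \;\leq\; O(\log n)\cdot d(x,y),
\]
which is exactly the statement of the lemma. The main conceptual obstacle is that the height-reduction step destroys the $\sigma$-hierarchical-separation property, so the final trees are only arbitrary weighted trees of bounded height rather than HSTs. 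This is acceptable here because the tree-metric algorithms developed elsewhere in the paper only use the height bound and not the hierarchical separation. I would therefore treat the height-reduction transformation of Bansal et al.\ as a black box and spend the actual proof on verifying that domination is preserved under composition and that the two distortion factors multiply, as claimed above.
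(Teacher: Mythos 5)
Your overall plan — compose the Fakcharoenphol–Rao–Talwar embedding into a distribution over $2$-HSTs with the Bansal–Buchbinder–M\k{a}dry–Naor height-reduction step — is exactly what the paper does, and your composition argument (domination is preserved, the two distortion factors multiply) is correct. However, you have mischaracterized the black box you are invoking in the second step, and the mischaracterization matters.

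You describe the height-reduction result as taking an arbitrary $n$-leaf weighted rooted tree and producing, via a randomized heavy-path/centroid-type decomposition, a tree of height $O(\log n)$ whose leaf metric dominates the original and expands distances by a constant factor \emph{in expectation}. Neither the generality nor the randomization is right. Theorem~8 of Bansal et al.\ is a \emph{deterministic} embedding, and it applies specifically to $\sigma$-HSTs, producing a tree of height $O(\log n)$ that distorts leaf distances by at most $2\sigma/(\sigma-1)$. The restriction to HSTs is not incidental: for an arbitrary weighted tree, height reduction to $O(\log n)$ with $O(1)$ distortion is simply impossible. Take the path metric on $n$ unit-spaced points, which is a tree of height $\Theta(n)$: any tree of height $O(\log n)$ dominating it must have some pair of adjacent points whose lowest common ancestor sits at large depth, forcing their distance to blow up polynomially. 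What saves the Bansal et al.\ construction is precisely the geometric decay of edge weights along root-to-leaf paths in an HST, which lets one contract long non-branching stretches at only a constant multiplicative cost. Your proof is rescued because step one does hand you $2$-HSTs, so the hypothesis of the theorem you actually need is met — but you should state the theorem correctly and recognize that the hierarchical-separation structure is what makes the height reduction possible at all, rather than assuming it works for arbitrary trees.
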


The other ingredient of the reduction is the following result, which states how the competitive ratio on the embedding metric is translated into the competitive ratio on the embedded metric. Although this was proved and used by Emek et al, we reproduce its proof in Appendix \ref{app_a}, for the sake of completeness.

\begin{lemma}\label{lem_reduction}
Suppose that a metric space $\mathcal{M}=(S,d)$ can be embedded into a distribution $\mathcal{D}$ supported on metric spaces over a set $S'\supseteq S$ with distortion $\mu$. Additionally, suppose that for every metric space $\mathcal{M'}$ in the support of $\mathcal{D}$, there is a deterministic online $(\beta,\gamma)$-competitive algorithm $\mathcal{A}^{\mathcal{M}'}$ for M(B)PMD on $\mathcal{M}'$. Then there is a $(\mu\beta,\gamma)$-competitive (and thus, $(\max(\mu\beta,\gamma))$-competitive) algorithm $\mathcal{A}$ for M(B)PMD on $\mathcal{M}$.
\end{lemma}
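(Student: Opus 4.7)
The plan is to follow the standard randomized-embedding recipe: define $\mathcal{A}$ by sampling $\mathcal{M}'=(S',d')\sim\mathcal{D}$ once, up front, and then simulating $\mathcal{A}^{\mathcal{M}'}$ on the online instance $I$, treating every request point $p_i\in S$ as an element of $S'\supseteq S$. Whenever $\mathcal{A}^{\mathcal{M}'}$ matches two requests $i,j$ at time $t$, $\mathcal{A}$ outputs the same pair at the same time. The goal is to show that, against an arbitrary solution $\sol$ of $I$ on $\mathcal{M}$, the expected cost of $\mathcal{A}$ (measured in $\mathcal{M}$) is at most $\mu\beta\cdot\sol_d+\gamma\cdot\sol_t$.

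The first key observation I would make is that delay and connection costs decouple cleanly under a change of metric. The delay cost of any pairing depends only on the arrival and match times, not on distances, so the delay cost of $\mathcal{A}$ on $\mathcal{M}$ equals that of $\mathcal{A}^{\mathcal{M}'}$ on $\mathcal{M}'$. The embedding condition $d(x,y)\le d'(x,y)$ for $x,y\in S$ immediately yields that the connection cost of $\mathcal{A}$'s matching, measured in $\mathcal{M}$, is upper bounded by its connection cost measured in $\mathcal{M}'$. Hence, conditional on the drawn $\mathcal{M}'$, one has $\mathcal{A}(I)\le \mathcal{A}^{\mathcal{M}'}(I)$, where the right-hand side is the cost of $\mathcal{A}^{\mathcal{M}'}$ with respect to $d'$.

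Next I would lift the reference solution. The pairing and match times prescribed by $\sol$ form a valid solution $\sol'$ of $I$ viewed as an instance on $\mathcal{M}'$; for MBPMD validity is preserved because polarities don't depend on the metric. Clearly $\sol'_t=\sol_t$, while by definition of distortion,
\[
\mathbb{E}_{\mathcal{M}'\sim\mathcal{D}}[\sol'_d]=\sum_{(i,j)\in\sol}\mathbb{E}[d'(p_i,p_j)]\le \mu\sum_{(i,j)\in\sol}d(p_i,p_j)=\mu\cdot\sol_d,
\]
using linearity of expectation and the pointwise distortion bound $\mathbb{E}[d'(x,y)]\le \mu\, d(x,y)$.

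Finally I would chain the two estimates using $(\beta,\gamma)$-competitiveness of $\mathcal{A}^{\mathcal{M}'}$ on $\mathcal{M}'$, which gives $\mathcal{A}^{\mathcal{M}'}(I)\le \beta\cdot\sol'_d+\gamma\cdot\sol'_t$ for every realization of $\mathcal{M}'$. Combining with the two previous displays and taking expectation,
\[
\mathbb{E}[\mathcal{A}(I)]\le \mathbb{E}[\mathcal{A}^{\mathcal{M}'}(I)]\le \beta\cdot\mathbb{E}[\sol'_d]+\gamma\cdot\sol_t\le \mu\beta\cdot\sol_d+\gamma\cdot\sol_t,
\]
which is exactly $(\mu\beta,\gamma)$-competitiveness. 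The only subtle point — and the place I would be most careful — is maintaining a clean bookkeeping between three different cost-quantifications (algorithm on $\mathcal{M}$, algorithm on $\mathcal{M}'$, and the lifted solution on $\mathcal{M}'$) and applying the competitiveness hypothesis pathwise in $\mathcal{M}'$ \emph{before} averaging over $\mathcal{D}$, so that the distortion bound can be applied to the solution's connection cost alone and the $\gamma$ factor stays off the delay term.
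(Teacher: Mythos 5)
Your proposal is correct and follows essentially the same route as the paper: sample $\mathcal{M}'\sim\mathcal{D}$, simulate $\mathcal{A}^{\mathcal{M}'}$, bound $\mathcal{A}(I)\le\mathcal{A}^{\mathcal{M}'}(I)$ pointwise via $d\le d'$, apply $(\beta,\gamma)$-competitiveness of $\mathcal{A}^{\mathcal{M}'}$ against the lifted reference solution, and only then average over $\mathcal{D}$ to trade $\mathbb{E}[\sol_{d'}]$ for $\mu\cdot\sol_d$ while leaving the delay term intact. The one small difference is purely expository: you make the ``lifting'' of $\sol$ to an $\mathcal{M}'$-solution $\sol'$ explicit, whereas the paper silently reuses the same matching and writes $\sol_{d'}$; the mathematical content is identical.
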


Given an $n$-point metric space, we have an embedding into distribution over tree metrics of height $h=O(\log n)$ with distortion $\mu=O(\log n)$, resulting from Lemma \ref{lem_embed}. In the next two subsections, we prove that there exist $(O(1),O(h))$-competitive algorithms for MPMD and MBPMD on tree metrics of height $h$, i.e.\ the algorithms always give a solution whose cost is at most $O(1)$ times the connection cost plus $O(h)$ times the delay cost of any solution. As a consequence of these algorithms and Lemma \ref{lem_reduction}, Theorem \ref{thm_ub} follows.

\subsection{A Deterministic Algorithm for MPMD on Trees}\label{sec_MPMD_tree}

Suppose the tree metric is given by an edge-weighted tree $T$ rooted at an arbitrary vertex $r$. For a vertex $u$, let $T_u$ denote the maximal subtree of $T$ rooted at $u$, $e_u$ denote the edge between $u$ and its parent, and $d_u$ denote the weight of $e_u$ ($d_r$ is defined to be infinite). Let $h$ be the height of the tree, that is, the maximum of the number of vertices in the path between $r$ and any vertex $u$. We assume, without loss of generality, that the requests are given only at the leaves of $T$. (If not, we pretend as if each non-leaf vertex $u$ has a child $u'$ at a distance zero which is a leaf, and the requests are given at $u'$ instead of $u$.)

The algorithm maintains a forest $F\subseteq T$, and we say that an edge has been \textit{bought} if it is in $F$. Initially, $F$ is empty. As soon as there are two requests at vertices $u$ and $v$ such that the entire path between $u$ and $v$ is bought, we connect the two requests, and the edges on the path are removed from $F$. We say that a new \textit{phase} begins at vertex $u$ when the edge $e_u$ between $u$ and its parent is used to connect requests. Of course, the phases of the vertices need not be aligned. 
We say that a vertex $u$ is \textit{saturated} if the edge $u$ has been bought ($r$ is never saturated, by definition), else, we say that $u$ is \textit{unsaturated}. We say that a vertex $u$ is \textit{odd} if the number of pending requests in $T_u$ is odd, else we say that $u$ is \textit{even}. Each vertex $u$ (including $r$) has a counter $z_u$, initially zero, which increases at a unit rate if $u$ is unsaturated and odd; otherwise $z_u$ is frozen. For $u\neq r$, as soon as the value of $z_u$ becomes equal to an integral multiple of $2d_u$, the edge $e_u$ between $u$ and its parent is bought, i.e.\ included in $F$, $u$ becomes saturated, and $z_u$ is frozen. When this edge is eventually used, $u$ becomes unsaturated again.

For analysis, let $y_u$ denote the final value of the counter $z_u$ at the end of the input. We will separately relate the connection cost as well as the delay cost of the algorithm to $\sum_uy_u$, and then relate $\sum_uy_u$ to the cost of the adversary.

\begin{lemma}\label{lem1}
The connection cost of the algorithm is at most $\left(\sum_uy_u\right)/2$.
\end{lemma}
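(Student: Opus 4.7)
The plan is to charge the algorithm's connection cost to the counter increments, exploiting the fact that every ``buy'' of an edge $e_u$ costs $d_u$ and consumes $2d_u$ units of $z_u$.

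First, I would rewrite the connection cost edge-by-edge rather than matching-by-matching. Whenever the algorithm connects two pending requests at vertices $u$ and $v$, it pays the total weight of the (entirely bought) path between $u$ and $v$ in $F$, and every edge on this path is then removed from $F$. Summing over all matchings produced by the algorithm, the total connection cost therefore equals $\sum_u d_u \cdot n_u$, where $n_u$ denotes the number of matchings whose connecting path uses $e_u$.

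Next, I would argue that $n_u \leq b_u$, where $b_u$ is the total number of times $e_u$ is bought during the execution. An edge can serve as part of a matching path only while it lies in $F$, so each use of $e_u$ must be preceded by a distinct buy of $e_u$ (after the edge's preceding use removed it from $F$). The key observation is then that $b_u$ is controlled by $y_u$: the counter $z_u$ is monotone non-decreasing in time, since at every moment it either advances at unit rate or is frozen, and the algorithm specifies that a new buy of $e_u$ occurs precisely when $z_u$ attains a fresh positive multiple of $2d_u$. Hence $b_u = \lfloor y_u / (2d_u) \rfloor$.

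Putting the pieces together, the connection cost is at most
\[
\sum_u d_u \cdot n_u \;\leq\; \sum_u d_u \cdot \left\lfloor \frac{y_u}{2d_u} \right\rfloor \;\leq\; \sum_u d_u \cdot \frac{y_u}{2d_u} \;=\; \frac{1}{2}\sum_u y_u,
\]
which is the stated bound. I do not expect a genuine obstacle here; the only detail that needs care is verifying that $z_u$ is monotone non-decreasing, so that each multiple of $2d_u$ below $y_u$ is crossed exactly once, and this is immediate from the algorithm's description.
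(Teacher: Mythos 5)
Your proof is correct and follows essentially the same approach as the paper: decompose the connection cost edge-by-edge and bound the number of uses of each edge $e_u$ by $\lfloor y_u/(2d_u)\rfloor$, the number of multiples of $2d_u$ that the monotone counter $z_u$ crosses. The only cosmetic difference is that you make the intermediate step $n_u \leq b_u$ (uses bounded by buys) explicit, whereas the paper asserts the count of uses directly.
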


\begin{proof}
For an arbitrary vertex $u$, recall that $e_u$ is the edge between $u$ and its parent and $d_u$ is its weight. Between two consecutive usages of $e_u$ to connect requests, $z_u$ increases by exactly $2d_u$. Thus, the number of times $e_u$ is used to connect requests is $\lfloor y_u/(2d_u)\rfloor$. As a consequence, the connection cost of the algorithm is $\sum_ud_u\cdot\lfloor y_u/(2d_u)\rfloor\leq\left(\sum_uy_u\right)/2$.
\end{proof}

In order to bound the delay cost of the algorithm, we need to make the following observations.
\begin{observation}\label{obs_parity}
At any time, an odd non-leaf vertex has at least one odd child. If an even non-leaf vertex has an odd child, then it has another odd child.
\end{observation}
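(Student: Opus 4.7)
The plan is to reduce both claims to an elementary parity argument on the partition of pending requests in $T_u$ among the subtrees of the children of $u$. Since we have assumed that requests only arrive at leaves, the set of pending requests in $T_u$ is the disjoint union, over children $v$ of $u$, of the sets of pending requests in $T_v$. Consequently, if $n_u$ and $n_v$ denote the number of pending requests in $T_u$ and $T_v$ respectively, then
\[
n_u = \sum_{v \text{ child of } u} n_v,
\]
and so the parity of $n_u$ is the sum modulo $2$ of the parities of the $n_v$'s.

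From here both parts are immediate. First, if $u$ is odd then $n_u$ is odd, so the sum on the right is odd modulo $2$, which forces at least one child $v$ with $n_v$ odd; that child is an odd vertex. Second, suppose $u$ is even and has some odd child $v_0$. Then $n_u$ is even while $n_{v_0}$ is odd, so
\[
\sum_{v \neq v_0} n_v \equiv n_u - n_{v_0} \equiv 1 \pmod{2},
\]
which means the remaining children contribute an odd sum modulo $2$, so at least one of them must itself be odd. That produces the required second odd child of $u$.

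There is no real obstacle here; the only point that needs to be recorded is the reduction to the leaf case made in the paragraph defining the algorithm (the convention that every non-leaf that would otherwise receive requests is replaced by a zero-length child that is a leaf). Once this is in place, the observation is a one-line parity computation, and no invariants about the forest $F$, the counters $z_u$, or the progress of the algorithm are needed.
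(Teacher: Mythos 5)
Your proof is correct, and it spells out exactly the parity argument that the paper leaves implicit (the observation is stated without proof). The key step, reducing to the case where requests arrive only at leaves so that $n_u$ decomposes as the sum of $n_v$ over children $v$, is the same reduction the paper makes in the paragraph introducing the algorithm.
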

\begin{observation}\label{obs_component}
At any time, except for the time instants when requests are paired up, each connected component of $F$ has at most one vertex with a pending request.
\end{observation}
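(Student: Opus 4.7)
My plan is to prove this by induction on time, where the time axis is discretized by the three event types that can modify the state: a new request arrives at a leaf, a new edge $e_w$ is added to $F$ (when $z_w$ reaches a multiple of $2d_w$), or two pending requests are paired up (and the edges along their connecting path are removed from $F$). Between consecutive events, neither $F$ nor the multiset of pending requests changes, so it suffices to verify the invariant immediately after each non-pairing event.

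The key structural fact I would use is that since $F$ is a subforest of $T$, every connected component $C$ of $F$ is itself a subtree of $T$. In particular, for any two vertices $u, v$ lying in a common component $C$, the unique $u$--$v$ path in $T$ is contained in $C$, and hence every edge on that path belongs to $F$. As an immediate consequence, the moment a component of $F$ contains two pending leaves $u$ and $v$, the algorithm's pairing rule fires and matches them instantly, which makes that moment a pairing instant by definition.

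The inductive step then reduces to a short case analysis. A request arrival at leaf $u$ adds at most one pending request to $u$'s component, while buying a new edge $e_w$ merges two components into one (summing their pending request counts). In either situation, if the resulting component were to contain two pending leaves, the structural fact supplies a fully bought path between them and the algorithm pairs them immediately; this removes the two requests and the path's edges from $F$, so the invariant is restored right after the pairing (possibly across a finer splitting of the original component). For a pairing event itself, the two matched requests exit the pending set, components elsewhere are unchanged, and the inductive hypothesis continues to apply to them.

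The main obstacle I expect is in treating the case where an arrival at a leaf $u$ occurs while $u$ already sits in a non-trivial component of $F$, since $e_u$ can persist in $F$ even after $u$'s previous pending request is cleared (for example, when it was paired with another request also at the leaf $u$, which leaves the connecting path empty and removes no edges from $F$). Even there the argument goes through without modification: either $u$'s component already contains another pending leaf $v$ and the arrival triggers an immediate pairing of $u$ and $v$, or it does not, and the arrival simply installs the unique pending request of that component at $u$.
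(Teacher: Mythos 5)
Your proof is correct. The paper presents this statement as an observation without an explicit proof, so there is no argument to compare against; your induction over the three event types (request arrival, edge purchase, pairing) is the natural one. The crux of your argument — that a connected component $C$ of the subforest $F\subseteq T$ is itself a subtree of $T$, so any two vertices of $C$ have their unique $T$-path lying entirely inside $C$ and hence entirely bought, which would force the pairing rule to fire immediately — is exactly the right structural fact and makes the inductive step clean: an arrival or an edge purchase can raise the pending count of some component to at most two, at which point the instant becomes a pairing instant (excluded by the statement) and the matched requests vanish, so every surviving sub-component carries at most one pending request afterward. The same-leaf corner case you flag, where two requests at leaf $u$ are matched so no edges are removed and $e_u$ persists in $F$, is indeed the subtle point worth noting, and your resolution of it is correct.
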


\begin{lemma}\label{lem2}
The delay cost of the algorithm is at most $2\sum_uy_u$.
\end{lemma}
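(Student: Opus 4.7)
The plan is to establish the pointwise bound $R(t) \le 2|O(t)|$ at every time $t$ (outside the measure-zero set of matching instants), where $R(t)$ is the number of pending requests at time $t$ and $O(t)$ is the set of odd, unsaturated vertices. Since the delay cost equals $\int_0^\infty R(t)\,dt$ and, by the definition of the counters, $\sum_u y_u = \int_0^\infty |O(t)|\,dt$, integrating this pointwise bound yields the lemma.

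To prove the pointwise bound at a fixed time $t$, I would contract every connected component of $F$ to a single vertex, obtaining a rooted quotient tree $T'$ whose vertices are in bijection with the unsaturated vertices of $T$ (each $F$-component has a unique topmost vertex, which is unsaturated, and this top can be used as a canonical representative). To each vertex $v$ of $T'$ assign $p(v) \in \{0,1\}$, the number of pending requests in the corresponding component; this is at most $1$ by Observation \ref{obs_component}. Writing $s(v)$ for the total of $p$ over the subtree of $T'$ rooted at $v$, one checks that $s(v)$ equals $R_u(t)$, the number of pending requests in $T_u$, for the corresponding unsaturated $u \in T$. In particular parity is preserved: the odd vertices of $T'$ correspond bijectively to the vertices of $O(t)$, and $\sum_v p(v) = R(t)$. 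The pointwise bound then reduces to the following combinatorial claim about $T'$: in any rooted tree with $p \colon V \to \{0,1\}$, the number of pending vertices is at most twice the number of vertices whose subtree sum $s$ is odd.

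For the combinatorial claim, introduce the set of ``even pending'' vertices $X := \{v : p(v) = 1 \text{ and } s(v) \text{ is even}\}$. For each $v \in X$, the sum $\sum_{c \text{ child of } v} s(c) = s(v) - p(v) = s(v) - 1$ is odd, so $v$ has an odd---hence positive---number of odd children. Since every vertex has a unique parent, distinct members of $X$ contribute disjoint odd children, giving $|X| \le O'$, where $O'$ is the number of odd vertices in $T'$. Splitting the pending vertices by the parity of $s$, $\sum_v p(v) = |\{v : p(v) = 1, s(v) \text{ odd}\}| + |X| \le O' + O' = 2O'$, which gives the required pointwise bound.

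The main conceptual step is recognizing that contracting $F$-components converts Observation \ref{obs_component} into the clean constraint ``at most one pending request per vertex of $T'$,'' reducing the analysis to a purely combinatorial statement about $\{0,1\}$-marked trees. The factor of $2$ then drops out of the injection $v \mapsto$ ``some odd child of $v$'' used to prove $|X| \le O'$; this is the step that actually uses parity to extract a usable count. After that everything is routine.
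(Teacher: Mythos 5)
Your proof is correct, and it establishes the same pointwise bound as the paper (at every moment the number of pending requests is at most twice the number of odd, unsaturated vertices, then integrate). But the route to that bound is genuinely different. The paper builds an explicit charging map $f$ from pending leaves to odd unsaturated vertices by walking upward to the first even ancestor and then chasing a chain of odd vertices until it exits the $F$-component, and proves the 2-to-1 property by contradiction via Observation~\ref{obs_component}. You instead contract each $F$-component to a point, which collapses that entire chain-chase into the clean statement ``every even vertex with a pending request has an odd child,'' and converts Observation~\ref{obs_component} into the hypothesis $p(v)\in\{0,1\}$. The resulting combinatorial lemma about $\{0,1\}$-marked rooted trees --- pending vertices number at most twice the odd-subtree-sum vertices, via the disjointness-of-children injection $|X|\le O'$ --- is self-contained and makes the factor of $2$ transparent. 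Your approach subsumes Observation~\ref{obs_parity} (your parity computation $\sum_c s(c)=s(v)-1$ reproves it) and trades the paper's case analysis and contradiction argument for a short direct count; the main extra work you pay is verifying that contraction preserves the relevant parities and that subtree sums in $T'$ agree with $R_u(t)$, which you handle correctly by noting that components with top inside $T_u$ partition $T_u$.
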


\begin{proof}
At any time, except for the time instants when requests are paired up, let $L$ denote the set of leaves with a pending request, and $A$ denote the set of odd unsaturated vertices. We will define a function $f:L\longrightarrow A$ such that for any $a\in A$, $|f^{-1}(a)|\leq 2$. We can then charge the waiting time cost of any $l\in L$ to the increase in $z_{f(l)}$. Note that by definition $f(l)\in A$, and hence, $dz_{f(l)}/dt=1$. Furthermore, since $|f^{-1}(a)|\leq 2$, the waiting time of at most two requests is charged to the increase in $z_a$, for any $a$. This proves the lemma.

Here is how we construct the function $f$. Let $l\in L$. Consider the component $C_l$ of $F$ containing the vertex $l$. This is a subtree of $T$. Note that its root $r_l$ is unsaturated, otherwise the edge $e_{r_l}$ between $r_l$ and its parent would also be in $F$, and $r_l$ would not be the root of $C_l$. If $r_l$ is odd, then $r_l\in A$, and we define $f(l)=r_l$ in this case. Suppose now that $r_l$ is even. Trace the path $l=u_0,u_1,\ldots$ from $l$ upward towards $r_l$ until the first even vertex, say $u_n$ ($n>1$ since $l$ is odd). Since $u_n$ is an even vertex with an odd child $u_{n-1}$, by Observation \ref{obs_parity}, $u_n$ has another odd child $v_0\neq u_{n-1}$. We now define a path $p=(v_0,\ldots,v_j)$ of odd vertices such that $v_0,\ldots,v_{j-1}\in C_l$, $v_j\notin C_l$, and $v_j$ is unsaturated. We then set $f(l)=v_j$.

If $v_0$ is outside $C_l$, then $v_0$ is the last on $p$. If $v_0$ is inside $C_l$, then $v_0$ cannot be a leaf of $T$. (Otherwise, since $v_0$ is odd, there is a pending request at $v_0$, and then $v_0$ and $l$ belong to the same component $C_l$, which contradicts Observation \ref{obs_component}.) Thus, $v_0$ has an odd child. Call this child $v_1$ and add it to $p$. If $v_1\notin C_l$, then $v_1$ is the last on $p$. Otherwise, we continue extending $p$ in the same manner. Note that this cannot go on indefinitely, and $p$ must terminate. Since we terminate $p$ as soon as we step out of $C_l$, $v_j$ is not in $C_l$, but $v_{j-1}$, the parent of $v_j$ is in $C_l$. Thus, the edge $e_{v_j}$ between $v_j$ and its parent is not in $F$, which means that $v_j$ is unsaturated. Since $v_j$ is odd and unsaturated, $v_j\in A$, making the definition $f(l)=v_j$ legal.

We are left to prove that $f^{-1}(a)\leq2$ for all $a\in A$. Suppose for contradiction that $f^{-1}(a)>2$. Then $a$ has two pre-images, say $l$ and $l'$, which are either both inside $T_a$, or both outside $T_a$. In the former case, $a$ is the root of both $C_l$ as well as $C_{l'}$, which means $C_l=C_{l'}$, that is, $l$ and $l'$ belong to the same connected component of $F$, contradicting Observation \ref{obs_component}. In the latter case, the parent of $a$ is in both $C_l$ as well as $C_{l'}$, by the construction of $f$, again contradicting Observation \ref{obs_component}.
\end{proof}

We need to relate $\sum_uy_u$ to the cost of an arbitrary solution $\sol$ to the instance. For this, let $x_u$ be the total delay cost incurred by $\sol$ due to requests inside $T_u$, and $x'_u$ be the total connection cost incurred by $\sol$ for using the edge between $u$ and its parent.

\begin{lemma}\label{lem3}
For all vertices $u$, $y_u\leq2(x_u+x'_u)$.
\end{lemma}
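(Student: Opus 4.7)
The plan is to introduce a potential function $\Psi_u(t):=z_u(t)-2x_u(t)-2x'_u(t)$, where $x_u(t):=\int_0^t s_u^{\sol}(\tau)\,d\tau$ is the running SOL delay cost in $T_u$ (with $s_u^{\sol}(\tau)$ the SOL-pending count in $T_u$ at time $\tau$), and $x'_u(t):=d_u\cdot S_u(t)$ is the running SOL connection cost of $e_u$ (with $S_u(t)$ the number of SOL crossings of $e_u$ up to time $t$). Since $\Psi_u(0)=0$, it suffices to show that $\Psi_u(t)\le 0$ at every time; letting $t\to\infty$ then yields the lemma.

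First I would catalogue the dynamics of $\Psi_u$. Between events it evolves continuously at rate $\mathbf{1}[\alpha_u(t)=1 \text{ and } u \text{ is unsaturated}]-2s_u^{\sol}(t)$, where $\alpha_u(t)\in\{0,1\}$ is the parity of the ALG-pending count in $T_u$. At each SOL crossing of $e_u$, $\Psi_u$ jumps down by $2d_u$. All other events (arrivals in $T_u$, ALG crossings of $e_u$, and ALG/SOL matchings internal to $T_u$) leave each of $z_u$, $x_u$, $x'_u$ unchanged at that instant, hence leave $\Psi_u$ unchanged. Consequently, $\Psi_u$ can increase only in the \emph{dry} regime $\alpha_u=1\wedge s_u^{\sol}=0$, and then only at rate $+1$ (assuming $u$ is unsaturated, else $0$); in every other regime the rate is $\le 0$ (rate $\le -1$ when $s_u^{\sol}\ge 1$, and rate $=0$ when $\alpha_u=0$ and $s_u^{\sol}=0$).

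The core of the argument is to amortize the dry-regime growth against SOL's costs. A dry regime can be entered only by one of: (a) a SOL crossing of $e_u$ that brings $s_u^{\sol}$ from $1$ to $0$ while $\alpha_u=1$; (b) a SOL matching internal to $T_u$ that brings $s_u^{\sol}$ from $2$ to $0$ while $\alpha_u=1$; or (c) an ALG crossing of $e_u$ that flips $\alpha_u$ from $0$ to $1$ while $s_u^{\sol}=0$ already held. In case (a), the crossing's own $-2d_u$ jump in $\Psi_u$ directly covers the ensuing dry growth, which is at most $2d_u$ within any single unsaturated period of $u$ since once $z_u$ has grown by $2d_u$ it hits a multiple of $2d_u$, $u$ becomes saturated, and $z_u$ freezes. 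In case (b), the preceding interval of length $\delta$ during which $s_u^{\sol}\ge 2$ held drove $\Psi_u$ down at rate $\le-3$, so $\Psi_u$ had already fallen by at least $3\delta$ before the matching, providing a credit intended to dominate the dry growth that follows (which is bounded by the time until the next arrival in $T_u$, since during dryness SOL can perform no events). Case (c) inherits the needed credit from an earlier instance of (a) or (b).

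The step I expect to be the main obstacle is case (b): formalizing that the pre-paid negative drift from SOL's carrying $\ge 2$ requests in $T_u$ really dominates the subsequent dry growth in every configuration, especially when several SOL internal matchings and ALG crossings are interleaved inside a single parity-disagreeing segment of $T_u$ (where $A_u(t)+S_u(t)$ is odd), and when an unsaturated period of $u$ straddles more than one such segment. To handle this I anticipate needing a refined invariant of the form $\Psi_u(t)+\phi_u(t)\le 0$, where $\phi_u(t)\ge 0$ is a state-dependent corrective term tracking the outstanding credit owed given the current parity of $A_u(t)+S_u(t)$ and the current value of $s_u^{\sol}(t)$, and designed so that $\phi_u$ vanishes both at $t=0$ and upon termination of the input; the latter delivers $\Psi_u(\infty)\le 0$ as required.
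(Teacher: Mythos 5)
Your pointwise invariant $\Psi_u(t)\le 0$ is a genuinely different route from the paper, which instead argues phase-by-phase: the paper defines $u$ to be \emph{aligned} or \emph{misaligned} according to whether the parities of ALG's and SOL's pending counts in $T_u$ agree, shifts each SOL crossing of $e_u$ to a phase boundary so that every phase is wholly aligned or wholly misaligned, and then counts how many of the $n_i$ phases between consecutive SOL crossings must be aligned (at least half, up to rounding). Your continuous-time potential reformulation is attractive, but as written it does not close, and the gap is precisely where you flag it.

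The problem is that the credit source you identify in case (b) can have measure zero. You attribute the credit to ``the preceding interval of length $\delta$ during which $s_u^{\sol}\ge 2$ held,'' but SOL's pending count in $T_u$ need never dwell at $2$: if the second request arrives at time $t_2$ and SOL matches the pair internally at $t_2$, then $\{s_u^{\sol}\ge 2\}$ has length $\delta = 0$, yielding zero credit while the ensuing dry growth can still be as large as $2d_u$. In the scenarios where your invariant does in fact survive case (b), the decrease of $\Psi_u$ that pays for the dry growth actually accrues \emph{elsewhere}: during the aligned stretch ($\alpha_u=1$, $s_u^{\sol}\ge1$, rate $\le -1$) that necessarily precedes the ALG crossing that created the misalignment, and during the misaligned stretch with $\alpha_u=0$ and $s_u^{\sol}\ge1$ (rate $\le-2$) between that ALG crossing and the next arrival in $T_u$. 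Neither of these appears in your case-(b) ledger. You do acknowledge the hole and propose a corrective term $\phi_u$ tracking ``outstanding credit owed,'' but that term is never constructed, and constructing it amounts to redoing the paper's alignment bookkeeping in disguise. Until $\phi_u$ is specified and the combined invariant $\Psi_u+\phi_u\le 0$ is verified against all event types (new arrivals, ALG crossings, SOL crossings, internal matches on both sides, saturation/unsaturation, and the incomplete final phase), the argument is not a proof.

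A secondary point worth fixing even once the main gap is addressed: your statement that the dry growth ``is bounded by the time until the next arrival in $T_u$'' understates what you actually need. Dry growth in a single unsaturated period of $u$ is additionally capped at $2d_u$ by saturation, and that cap is essential: without it there is no hope of the $-2d_u$ jump in case (a) covering the growth. More importantly, dry regimes recur once per phase, so any credit argument must show that fresh credit is generated before each one, which is exactly what the paper's per-phase counting ($\lceil n_0/2\rceil + \sum_i\lfloor n_i/2\rfloor$ aligned phases, each contributing $\ge 2d_u$ to $x_u$) accomplishes cleanly and which your sketch does not yet replicate.
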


\begin{proof}
Call $u$ \textit{misaligned} if the parity of the number of algorithm's pending requests inside $T_u$ and the parity of the number of $\sol$'s pending requests inside $T_u$ do not agree; otherwise call $u$ \textit{aligned}. As long as $u$ is aligned, whenever $z_u$ is increasing, the adversary has a pending request inside $T_u$, which means $x_u$ is also increasing at least at a unit rate. We ignore the increase in $x_u$ when $u$ is misaligned. The alignment status flips only when the edge between $u$ and its parent is used either by the algorithm or by the adversary.

Say that event $E$ occurs when the adversary pairs up a request inside $T_u$ to a request outside $T_u$. Suppose $E$ occurs $k$ times. Then $x'_u=kd_u$. For every phase in which $E$ occurs, imagine that the occurrences of $E$ are shifted to one of the boundaries of that phase, so that $u$ was misaligned for the entire phase. Ignore the contribution of the delay to $x_u$ in these phases. This can only decrease $x_u$. As a result of this, for every phase, $u$ is either aligned or misaligned in the entire phase. Also, in case the last phase is incomplete, if $u$ was aligned, ignore its contribution to $x_u$ and $y_u$, else pretend as if it was complete. This can only increase the ratio of $y_u$ to $x_u+x_u'$.

Let $n_i$ (resp.\ $n_0$) be the number of phases between the $i^{\text{\tiny{th}}}$ and the $(i+1)^{\text{\tiny{th}}}$ occurrence of $E$ (resp.\ before the first occurrence of $E$). Then at least $\lfloor n_i/2\rfloor$ (resp.\ $\lceil n_0/2\rceil$) of these must be phases in which $u$ was aligned, since the alignment status flips exactly when a new phase begins (resp.\ and since $u$ is aligned in the first phase if $n_0>0$). The contribution of all these phases to $x_u$ will be at least $2d_u\left(\lceil n_0/2\rceil+\sum_{i=1}^{k}\lfloor n_i/2\rfloor\right)\geq2d_u\left(\sum_{i=0}^kn_i/2-k/2\right)=d_u\cdot\left(\sum_{i=0}^kn_i-k\right)$. Adding $x'_u$, we have $x_u+x'_u\geq d_u\left(\sum_{i=1}^kn_i-k\right)+kd_u=d_u\sum_{i=1}^kn_i=y_u/2$,
where the last equality holds since $z_u$ increases by $2d_u$ in every phase. Thus, we have $y_u\leq2(x_u+x'_u)$.
\end{proof}

Finally, we relate $\sum_u(x_u+x'_u)$ to the cost of the solution $\sol$. Denoting the distance function of the tree metric by $d$, recall that $\sol_d$ and $\sol_t$ denote the connection cost and the delay cost of $\sol$ respectively.

\begin{lemma}\label{lem4}
$\sum_u(x_u+x'_u)\leq\sol_d+h\cdot\sol_t$.
\end{lemma}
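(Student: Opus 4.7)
The plan is to handle the two sums $\sum_u x'_u$ and $\sum_u x_u$ separately, since they decompose cleanly under the tree structure.

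For the connection term, I would observe that $x'_u$ counts, weighted by $d_u$, exactly the pairs $(i,j)$ matched by $\sol$ whose connecting path in $T$ uses the edge $e_u$. For any matched pair $(i,j)$, the distance $d(p_i,p_j)$ is precisely $\sum_{e_u \text{ on the } p_i\text{-to-}p_j \text{ path}} d_u$. Summing over all matched pairs and swapping the order of summation gives $\sum_u x'_u = \sol_d$ exactly, with no slack.

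For the delay term, I would use the fact that a request at leaf $p_i$ lies in $T_u$ if and only if $u$ is an ancestor of $p_i$ (including $p_i$ itself). Hence its delay (as incurred by $\sol$) contributes to $x_u$ for exactly those $u$, and the number of such ancestors is at most $h$ by the height bound on $T$. Summing over all requests, $\sum_u x_u = \sum_i (\text{delay of }i) \cdot |\{u : p_i \in T_u\}| \leq h \cdot \sol_t$.

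Adding the two inequalities yields $\sum_u(x_u+x'_u) \leq \sol_d + h\cdot \sol_t$, as required. There is really no hard step here; the only thing to be careful about is the convention that requests live only at the leaves (already assumed in the setup of Section~\ref{sec_MPMD_tree}), so the ancestor-counting argument is clean. The factor $h$ on the delay term is the price paid for charging each request's waiting time separately at every level of the tree, and it is what ultimately forces the $O(h)$ factor in the delay coefficient of the $(O(1),O(h))$-competitiveness guarantee.
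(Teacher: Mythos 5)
Your proof is correct and follows essentially the same route as the paper: $\sum_u x'_u = \sol_d$ exactly (by decomposing each matched pair's connection cost along the tree path), and $\sum_u x_u \leq h\cdot\sol_t$ because a request at a leaf contributes to $x_u$ only for the at most $h$ ancestors $u$ of that leaf. The paper states this more tersely but with identical reasoning.
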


\begin{proof}
$\sum_ux'_u$ is clearly equal to the connection cost $\sol_d$ of the solution. A pending request in the solution $\sol$ at a leaf $l$ contributes to the increase in $x_u$ if an only if $u$ is an ancestor of $l$. Thus, each pending request in $\sol$ contributes at most $h$ to the rate of increase of $\sum_ux_u$, and one to the rate of increase of $\sol_t$. Therefore, $\sum_ux_u$ is at most $h$ times the delay cost $\sol_t$ of the solution.
\end{proof}

The competitiveness of the algorithm now follows easily.

\begin{theorem}\label{thm_tree}
The algorithm for MPMD on tree metrics is $(5,5h)$-competitive, and hence, $5h$-competitive.
\end{theorem}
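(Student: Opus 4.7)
The plan is to simply chain together Lemmas \ref{lem1}--\ref{lem4} in order, so essentially no new ideas are needed; the theorem is a bookkeeping step. First I would add the connection cost bound from Lemma \ref{lem1} and the delay cost bound from Lemma \ref{lem2}, obtaining that the algorithm's total cost is at most $\left(\sum_u y_u\right)/2 + 2\sum_u y_u = (5/2)\sum_u y_u$.

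Next I would apply Lemma \ref{lem3} vertex by vertex, summing $y_u \leq 2(x_u + x'_u)$ over all $u$, to get $\sum_u y_u \leq 2 \sum_u (x_u + x'_u)$. Then I would substitute the bound from Lemma \ref{lem4}, $\sum_u (x_u + x'_u) \leq \sol_d + h \cdot \sol_t$, yielding $\sum_u y_u \leq 2(\sol_d + h\cdot \sol_t)$.

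Combining these two chains gives the algorithm's cost at most $(5/2) \cdot 2 \cdot (\sol_d + h\cdot\sol_t) = 5\,\sol_d + 5h\,\sol_t$, which is exactly the $(5, 5h)$-competitiveness claim. The final $5h$-competitiveness follows from the general observation, noted right after the definition of $(\beta,\gamma)$-competitiveness, that any $(\beta,\gamma)$-competitive algorithm is $\max(\beta,\gamma)$-competitive (here $\max(5, 5h) = 5h$, assuming the nontrivial case $h \geq 1$).

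There is no substantive obstacle at this stage; all the work has been done in the preceding lemmas. The only thing worth double-checking is that the inequalities in Lemmas \ref{lem1}--\ref{lem4} hold simultaneously against the same arbitrary solution $\sol$ and the same execution of the algorithm, which they do: $y_u$ depends only on the algorithm's run, while $x_u$ and $x'_u$ are defined with respect to $\sol$, and the lemmas are stated uniformly over $\sol$. Hence the estimates compose cleanly for every instance and every feasible $\sol$, giving the stated competitiveness.
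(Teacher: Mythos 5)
Your proposal is correct and follows exactly the same route as the paper's own proof: add the bounds from Lemmas~\ref{lem1} and~\ref{lem2}, then chain Lemmas~\ref{lem3} and~\ref{lem4} to convert $\sum_u y_u$ into $\sol_d + h\cdot\sol_t$. No further comment is needed.
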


\begin{proof}
From Lemmas \ref{lem1} and \ref{lem2}, the algorithm's total cost is at most $\frac{5}{2}\sum_uy_u$. By Lemma \ref{lem3}, this is at most $5\sum(x_u+x'_u)$, which by Lemma \ref{lem4}, is at most $5\sol_d+5h\cdot\sol_t$. Therefore, the algorithm is $(5,5h)$-competitive.
\end{proof}

Finally, we prove Theorem \ref{thm_ub} for MPMD using the above theorem and the reduction result (Lemma \ref{lem_reduction}).

\begin{proof}[Proof of Theorem \ref{thm_ub} for MPMD]
Given an arbitrary $n$-point metric space $\mathcal{M}$, Lemma \ref{lem_embed} ensures that $\mathcal{M}$ can be embedded into a distribution $\mathcal{D}$ supported on metrics induced by trees of height $O(\log n)$. Theorem \ref{thm_tree} ensures that there is an $(O(1),O(\log n))$-competitive algorithm for MPMD on every tree metric in the support of $\mathcal{D}$, i.e.\ the algorithm always returns a solution with cost at most $O(1)$ times the connection cost plus $O(\log n)$ times the delay cost of any solution. Therefore, by Lemma \ref{lem_reduction} there is an $(O(\log n),O(\log n))$-competitive (equivalently, $O(\log n)$-competitive) algorithm for MPMD on $\mathcal{M}$. This algorithm samples a tree metric $\mathcal{M}'$ from $\mathcal{D}$, and runs the deterministic algorithm for MPMD on tree metrics on $\mathcal{M}'$ to process the online input.
\end{proof}

\subsection{A Deterministic Algorithm for MBPMD on Trees}

Suppose the tree metric is given by an edge-weighted tree $T$ rooted at an arbitrary vertex $r$. As before, for a vertex $u$, let $T_u$ denote the maximal subtree of $T$ rooted at $u$, $e_u$ denote the edge between $u$ and its parent, and $d_u$ denote the weight of $e_u$ ($d_r$ is defined to be infinite). Let $h$ be the height of the tree, that is, the maximum of the number of vertices in the path between $r$ and any vertex $u$. Again we assume, without loss of generality, that the requests are given only at the leaves of $T$. Let $\lca(u,v)$ denote the lowest common ancestor of vertices $u$ and $v$ in the tree. Define the \textit{surplus} of a vertex $v$ to be the number of positive requests minus the number of negative requests in $T_v$, and denote it by $\sur(v)$. (Note that $\sur(v)$ can be negative.)

The algorithm maintains two forests $F^+,F^-\subseteq T$. Initially, both $F^+$ and $F^-$ are empty. We say that a vertex is \textit{positively saturated} (resp.\ \textit{negatively saturated}) if the edge between it and its parent is in $F^+$ (resp.\ $F^-$). ($r$ is never saturated, by definition), else, we say it is \textit{positively unsaturated} (resp.\ \textit{negatively unsaturated}).\footnote{Note $F^+$ and $F^-$ are not necessarily disjoint, and therefore, a vertex can be both positively as well as negatively saturated at the same time.} Each vertex $u$ (including $r$) has two counters $z^+_u$ and $z^-_u$, initially zero. Counter $z^+_u$ (resp.\ $z^-_u$) increases at the rate $\sur(u)$ (resp.\ $-\sur(u)$) if $u$ is positively unsaturated (resp.\ negatively unsatuared) and $\sur(u)>0$ (resp.\ $\sur(u)<0$); otherwise $z^+_u$ (resp.\ $z^-_u$) is frozen. For $u\neq r$, as soon as the value of $z^+_u$ (resp.\ $z^-_u$) becomes equal to $2d_u$, the edge $e_u$ between $u$ and its parent is added to $F^+$ (resp.\ $F^-$), $u$ becomes positively saturated (resp.\ negatively saturated), and $z^+_u$  (resp.\ $z^-_u$) is frozen.

As soon as there is a positive request at a vertex $u^+$ and a negative request at a vertex $u^-$ such that the entire path between $u^+$ and $\lca(u^+,u^-)$ is contained in $F^+$, and the entire path between $u^-$ and $\lca(u^+,u^-)$ is contained in $F^-$, we connect the two requests, remove edges on the path from $u^+$ to $u^-$ from both $F^+$ as well as $F^-$, and reset all counters associated with these edges to $0$. We say that a new \textit{phase} begins at vertex $u$ when the edge $e_u$ between $u$ and its parent is used to connect requests. 


For analysis, imagine a variable $Z^+_u$ (resp.\ $Z^-_u$) for every $u$, which increases at the same rate as $z^+_u$ (resp.\ $z^-_u$) during the run of the algorithm, but which is never reset to zero. Let $y^+_u$ (resp.\ $y^-_u$) denote the final value of $Z^+_u$ (resp.\ $Z^-_u$). We will separately relate the connection cost as well as the delay cost of the algorithm to $\sum_u(y^+_u+y^-_u)$, and then relate $\sum_u(y^+_u+y^-_u)$ to the cost of the adversary.

\begin{lemma}\label{lem1b}
The connection cost of the algorithm is at most $\frac{1}{2}\sum_u(y^+_u+y^-_u)$.
\end{lemma}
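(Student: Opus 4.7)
I would mirror the proof of Lemma \ref{lem1}, but now each edge $e_u$ (for $u\neq r$) can be used by a connection in two distinct ways, so I keep separate accounts for the positive and negative uses of $e_u$ and charge them to $y^+_u$ and $y^-_u$ respectively. Concretely, call a single connection between a positive request at $u^+$ and a negative request at $u^-$ a ``positive use'' of $e_u$ if $e_u$ lies on the segment from $u^+$ up to $\lca(u^+,u^-)$, and a ``negative use'' if $e_u$ lies on the segment from $u^-$ up to $\lca(u^+,u^-)$. The total connection cost equals $\sum_u d_u\cdot(k^+_u+k^-_u)$, where $k^+_u$ and $k^-_u$ count positive and negative uses of $e_u$.

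The core step is to show $k^+_u\leq y^+_u/(2d_u)$; the bound $k^-_u\leq y^-_u/(2d_u)$ follows by symmetry. For a positive use of $e_u$ to take place, $e_u$ must be present in $F^+$ at that moment, so each positive use is preceded by an addition of $e_u$ to $F^+$ that has not yet been cancelled by a removal. Every addition of $e_u$ to $F^+$ is triggered by $z^+_u$ reaching the threshold $2d_u$ starting from $0$ (either the initial value or the value right after the last reset of $z^+_u$). Since $Z^+_u$ grows at the same rate as $z^+_u$ but never resets, each such addition contributes an increase of exactly $2d_u$ to $Z^+_u$. Thus the number of additions of $e_u$ to $F^+$ over the whole run is at most $\lfloor y^+_u/(2d_u)\rfloor$, and $k^+_u$ is at most this since distinct positive uses must be separated by distinct additions.

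Summing gives $\sum_u d_u(k^+_u+k^-_u)\leq\sum_u d_u\cdot\frac{y^+_u+y^-_u}{2d_u}=\frac{1}{2}\sum_u(y^+_u+y^-_u)$, which is the claimed bound.

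\textbf{Main obstacle.} The subtle point is that a connection triggered by a pair $(u^+,u^-)$ removes \emph{every} edge on the $u^+$-to-$u^-$ path from \emph{both} forests and resets both $z^+$ and $z^-$ counters, so $z^+_u$ can be reset even when $e_u$ is not being positively used (for instance, when $e_u$ happens to sit in $F^+$ while lying on the negative-side segment of some connection path). Such ``wasteful'' resets only discard previously accumulated $z^+_u$, and the argument above bounds $k^+_u$ by the number of \emph{additions} of $e_u$ to $F^+$ rather than the number of resets, so these spurious resets do not weaken the bound; they merely make it potentially loose in the algorithm's favour.
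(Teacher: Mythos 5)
Your proof is correct and takes essentially the same approach as the paper's one-line argument ("between two consecutive usages of $e_u$, either $Z^+_u$ or $Z^-_u$ increases by exactly $2d_u$"). You refine the bookkeeping by tracking positive and negative uses of $e_u$ separately and charging each to its own counter $Z^+_u$ or $Z^-_u$, whereas the paper charges each (undifferentiated) usage interval to the sum $Z^+_u+Z^-_u$; both accountings rest on the same observation that every addition of $e_u$ to a forest requires a fresh $2d_u$ of the corresponding counter, and your "Main obstacle" remark correctly handles the spurious resets that the paper's terse proof leaves implicit.
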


\begin{proof}
For an arbitrary vertex $u$, recall that $e_u$ is the edge between $u$ and its parent and $d_u$ is its weight. Between two consecutive usages of $e_u$ to connect requests, either $Z^+_u$ or $Z^-_u$ increases by exactly $2d_u$. This implies the claim.
\end{proof}

In order to bound the delay cost of the algorithm, we need to bound the number of pending requests at any moment by the rate of the increase of the counters. We do this by induction on the tree $T$. For this, we need the following definition.

\begin{definition}
A \textit{snapshot} $\mathcal{S}$ is a tuple $(T,R,F^+,F^-)$, where
\begin{itemize}
\item $T$ is a rooted tree.
\item $R$ is a function from the leaves of $T$ to $\mathbb{Z}$, where $R(l)$ denotes the signed number of requests at leaf $l$.
\item $F^+,F^-\subseteq T$ are forests.
\end{itemize}
Let $\rho^+(\mathcal{S})=\sum_{l\text{: leaf of }T}\max(R(l),0)$ (resp.\ $\rho^-(\mathcal{S})=\sum_{l\text{: leaf of }T}\max(-R(l),0)$) denote the number of positive (resp.\ negative) requests in the snapshot $\mathcal{S}$, and define 
\[\zeta(\mathcal{S})=\sum_{u:e_u\notin F^+}\max(\sur(u),0)+\sum_{u:e_u\notin F^-}\max(-\sur(u),0)\]
Call the snapshot \textit{valid} if there is no pair of leaves $l^+$ and $l^-$ such that the following are satisfied.
\begin{itemize}
\item $R(l^+)>0$ and $R(l^-)<0$.
\item The path between $l^+$ and $\lca(l^+,l^-)$ is contained in $F^+$, and the path between $l^-$ and $\lca(l^+,l^-)$ is contained in $F^-$.
\end{itemize}
\end{definition}

Observe that if our algorithm has snapshot $\mathcal{\mathcal{S}}$, then the total rate of increase of the counters would be $\zeta(\mathcal{S})$. Our goal is to bound the number of pending requests by the rate of increase of the counters. Therefore, we bound $\rho^+(\mathcal{S})$ and $\rho^-(\mathcal{S})$ by $\zeta(\mathcal{S})$. Note that the algorithm is defined in such a way that as soon as its snapshot becomes invalid, requests get eliminated and the snapshot becomes valid again.

\begin{lemma}\label{lem_f_rec}
Given a valid snapshot $\mathcal{S}=(T,R,F^+,F^-)$, we have $\rho^+(\mathcal{S})\leq\zeta(\mathcal{S})$ and $\rho^-(\mathcal{S})\leq\zeta(\mathcal{S})$.
\end{lemma}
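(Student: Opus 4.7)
The proof is by induction on the number of vertices of $T$. The base case, where $T$ consists of a single vertex $r$ (necessarily a leaf), is immediate: with the convention that the root is vacuously in both $V^+$ and $V^-$, we have $\rho^{\pm}(\mathcal{S})=\max(\pm R(r),0)\leq |R(r)|=\zeta(\mathcal{S})$.

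For the inductive step, let $r$ have children $c_1,\dots,c_k$, and let $\mathcal{S}_i$ denote the sub-snapshot on $T_{c_i}$ obtained by restricting $R$, $F^+$, and $F^-$ to $T_{c_i}$. Validity is inherited by each $\mathcal{S}_i$, since any admissible pair in $T_{c_i}$ has its LCA inside $T_{c_i}$ and is also an admissible pair of $\mathcal{S}$. Tracking how each vertex's contribution shifts between $\zeta(\mathcal{S})$ and the $\zeta(\mathcal{S}_i)$'s, a direct calculation yields
\[
\zeta(\mathcal{S})=|\sur(r)|+\sum_i\zeta(\mathcal{S}_i)-\sum_{i:\,e_{c_i}\in F^+}\max(\sur(c_i),0)-\sum_{i:\,e_{c_i}\in F^-}\max(-\sur(c_i),0),
\]
the subtractions arising because, when $e_{c_i}\in F^{\pm}$, the root $c_i$ of $T_{c_i}$ contributes to $\zeta(\mathcal{S}_i)$ but not to $\zeta(\mathcal{S})$, whereas $r$ contributes $|\sur(r)|$ only to $\zeta(\mathcal{S})$.

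Simply summing the inductive bounds $\rho^{\pm}(\mathcal{S}_i)\leq\zeta(\mathcal{S}_i)$ yields $\rho^{\pm}(\mathcal{S})\leq\sum_i\zeta(\mathcal{S}_i)$, which can strictly exceed $\zeta(\mathcal{S})$ when the adjustment terms outweigh $|\sur(r)|$. To close this gap I would strengthen the inductive hypothesis to additionally track $\tau^+(\mathcal{S})$ and $\tau^-(\mathcal{S})$, the numbers of positive (resp.\ negative) leaves whose paths to the root of the underlying tree lie entirely within $F^+$ (resp.\ $F^-$). The validity condition at $r$ immediately forces $\tau^+(\mathcal{S})\cdot\tau^-(\mathcal{S})=0$, because any such positive leaf and negative leaf would form an admissible pair with LCA at $r$. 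The strengthened invariant would bundle bounds of the form $\rho^+(\mathcal{S})+\max(-\sur(r),0)\leq\zeta(\mathcal{S})$ together with a companion bound on $\tau^+(\mathcal{S})$ (and the symmetric statements for the negative side), so that the asymmetry guaranteed by validity at $r$ pays for the adjustment of each saturated child $c_i$ against $|\sur(r)|$ plus the $\tau^{\pm}$-slack.

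The main obstacle will be calibrating the strengthened invariant so that it simultaneously propagates through the inductive step and implies the clean statement $\rho^{\pm}(\mathcal{S})\leq\zeta(\mathcal{S})$. The delicate cases are those in which several children have $e_{c_i}\in F^+$ or $F^-$ with near-tight IH bounds: here one must exploit the observation that a child $c_i$ with $e_{c_i}\in F^+$ contributes the full adjustment $\max(\sur(c_i),0)$ to $\zeta(\mathcal{S})$ only when some positive leaf in $T_{c_i}$ is $F^+$-reachable to $c_i$, and validity at $r$ rules out simultaneous $F^-$-reachable negatives in the remaining subtrees, permitting the bookkeeping to close.
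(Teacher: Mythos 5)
Your proposal correctly identifies the central obstacle of the top-down decomposition: the exact identity
\[
\zeta(\mathcal{S})=|\sur(r)|+\sum_i\zeta(\mathcal{S}_i)-\sum_{i:\,e_{c_i}\in F^+}\max(\sur(c_i),0)-\sum_{i:\,e_{c_i}\in F^-}\max(-\sur(c_i),0)
\]
is right, and it does make clear that naively summing $\rho^{\pm}(\mathcal{S}_i)\leq\zeta(\mathcal{S}_i)$ over the children cannot close. However, what follows is not a proof but a roadmap: you \emph{describe} a strengthened inductive hypothesis involving the auxiliary quantities $\tau^{\pm}$ and bounds ``of the form $\rho^+(\mathcal{S})+\max(-\sur(r),0)\leq\zeta(\mathcal{S})$,'' but you never actually state the invariant precisely, never verify it in the base case, and never carry out the inductive step. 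You acknowledge this yourself (``the main obstacle will be calibrating the strengthened invariant,'' ``permitting the bookkeeping to close''). As written, the claim that the subtractions at saturated children can always be paid for by $|\sur(r)|$ plus the $\tau^{\pm}$-slack is unsubstantiated, and there is no argument offered for the delicate case of several children with saturated parent edges and near-tight hypotheses. This is a genuine gap, not a routine omission.

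The paper avoids this calibration problem entirely by running the induction in the opposite direction. Instead of decomposing $T$ into the subtrees of the root's children (so that the inductive step must reconcile the contributions of many subproblems at once), it picks a vertex $v$ \emph{all of whose children are leaves}, contracts those leaves into $v$ (setting $R'(v)=\sur(v)$ and, in the two ``b'' subcases, deleting $e_v$ from $F'^{\pm}$ to restore validity), and appeals to the inductive hypothesis on a tree with strictly fewer vertices. Validity of $\mathcal{S}$ at $v$ directly gives the needed slack: it forces that either all positive leaves or all negative leaves under $v$ contribute to $\zeta(\mathcal{S})$, which is exactly enough to absorb the change in $\rho^+$ across the contraction. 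This single-vertex, bottom-up contraction keeps each inductive step to a small case analysis on $\sur(v)$ and the saturation status of $e_v$, with no need to strengthen the statement. If you want to salvage a top-down argument, you would need to write down the strengthened invariant explicitly and check both its base case and its propagation through your identity above; until then, the proof is incomplete.
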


\begin{proof}
We only prove the upper bound on $\rho^+(\mathcal{S})$. The upper bound on $\rho^-(\mathcal{S})$ follows by a symmetric argument. The proof is by induction on the structure of $T$. When $T$ has a single vertex, the claim is obvious. Suppose $T$ has more than one vertices. Let $v$ be an arbitrary vertex of $T$ such that all children of $v$ are leaves (such a vertex always exists). Let $R^+=\sum_{l\text{: child of }v}\max(R(l),0)$ and $R^-=\sum_{l\text{: child of }v}\max(-R(l),0)$ be the number of positive and negative requests respectively in the subtree rooted at $v$. Then $\sur(v)=R^+-R^-$.

We split the proof into several cases, depending on the sign of $\sur(v)$. In each case, we construct another snapshot $\mathcal{S'}=(T',R',F'^+,F'^-)$ as follows. $T'$ is $T$ with the children of $v$ removed, so that $v$ is a leaf of $T'$. $R'$ is the same as $R$, except that $R'(v)$ is equal to $\sur(v)$ in $\mathcal{S}$. This ensures that the surplus of any vertex (except for the children of $v$) is the same in $\mathcal{S}'$ and $\mathcal{S}$. $F'^+$ and $F'^-$ are restrictions of $F^+$ and $F^-$ respectively to $T'$. In cases where the validity of such a snapshot is not ensured, we make minor adjustments to $F^+$ or $F^-$ and restore validity. Using the induction hypothesis for the snapshot $\mathcal{S}'$, we prove the claim.

Since $\mathcal{S}$ is a valid snapshot to begin with, either none of the children of $v$ with positive requests is connected to $v$ in $F^+$, or none of the children of $v$ with negative requests is connected to $v$ in $F^-$. Thus, either all the leaves with positive requests, or all the leaves with negative requests, (or all the leaves) under $v$ contribute to $\zeta(\mathcal{S})$. However, in any case, none of the leaves under $v$ contributes to $\zeta(\mathcal{S}')$.

\noindent \textbf{Case 1: $\sur(v)=0$.} Then $R^+=R^-$. Observe that snapshot $\mathcal{S'}$ is valid in this case. We have,
\[\rho^+(\mathcal{S})=\rho^+(\mathcal{S}')+R^+\]
As observed earlier, either all the positive or all the negative requests under $v$ contribute to $\zeta(\mathcal{S})$, but none of them contributes to $\zeta(\mathcal{S}')$. Therefore,
\[\zeta(\mathcal{S})\geq\zeta(\mathcal{S}')+R^+\]
By induction, $\rho^+(\mathcal{S'})\leq\zeta(\mathcal{S'})$. Putting everything together, we get that $\rho^+(\mathcal{S})\leq\zeta(\mathcal{S})$.

\noindent \textbf{Case 2a: $\sur(v)>0$ and one of the children of $v$ with positive requests is connected to $v$ in $F^+$.} Again, observe that snapshot $\mathcal{S'}$ is valid in this case. We have,
\[\rho^+(\mathcal{S})=\rho^+(\mathcal{S}')-\sur(v)+R^+=\rho^+(\mathcal{S}')+R^-\]
Since one of the children of $v$ with positive requests is connected to $v$ in $F^+$, none of the children of $v$ with negative requests is connected to $v$ in $F^-$. These negative requests contribute $R^-$ to $\zeta(\mathcal{S})$, but not to $\zeta(\mathcal{S}')$. By construction, other than the children of $v$, the contribution of every vertex to $\zeta(\mathcal{S})$ and $\zeta(\mathcal{S}')$ is the same. Therefore,
\[\zeta(\mathcal{S})\geq\zeta(\mathcal{S}')+R^-\]
By induction, $\rho^+(\mathcal{S'})\leq\zeta(\mathcal{S'})$. Putting everything together, we get that $\rho^+(\mathcal{S})\leq\zeta(\mathcal{S})$.

\noindent \textbf{Case 2b: $\sur(v)>0$ and none of the children of $v$ with positive requests is connected to $v$ in $F^+$.} In this case, $\mathcal{S}'$ need not be a valid snapshot, but observe that this happens only when $v$ is not the root, and the edge between $v$ and its parent belongs to $F'^+$. Remove that edge from $F'^+$, and observe that this makes $\mathcal{S}'$ valid. The side-effect of this tweak is that $v$ now contributes $\sur(v)$ to $\zeta(\mathcal{S}')$, but does not contribute anything to $\zeta(\mathcal{S})$. Recall that the children of $v$ with positive requests contribute to $\zeta(\mathcal{S})$ but not $\zeta(\mathcal{S}')$, and observe that all vertices, other than $v$ and its children, contribute equally to $\zeta(\mathcal{S})$ and $\zeta(\mathcal{S}')$. Thus,
\[\zeta(\mathcal{S})\geq\zeta(\mathcal{S}')-\sur(v)+R^+\]
As in case 2a, we have
\[\rho^+(\mathcal{S})=\rho^+(\mathcal{S}')-\sur(v)+R^+\]
Again, by induction, $\rho^+(\mathcal{S'})\leq\zeta(\mathcal{S'})$. Putting everything together, we get that $\rho^+(\mathcal{S})\leq\zeta(\mathcal{S})$.

\noindent \textbf{Case 3a: $\sur(v)<0$ and one of the children of $v$ with negative requests is connected to $v$ in $F^-$.} Again, observe that snapshot $\mathcal{S'}$ is valid in this case. Noting that $v$ has $|\sur(v)|$ negative requests in $\mathcal{S}'$, we have,
\[\rho^+(\mathcal{S})=\rho^+(\mathcal{S}')+R^+\]
Since one of the children of $v$ with negative requests is connected to $v$ in $F^-$, none of the children of $v$ with positive requests is connected to $v$ in $F^+$. These positive requests contribute $R^+$ to $\zeta(\mathcal{S})$, but not to $\zeta(\mathcal{S}')$. By construction, other than the children of $v$, the contribution of every vertex to $\zeta(\mathcal{S})$ and $\zeta(\mathcal{S}')$ is the same. Therefore,
\[\zeta(\mathcal{S})\geq\zeta(\mathcal{S}')+R^+\]
By induction, $\rho^+(\mathcal{S'})\leq\zeta(\mathcal{S'})$. Putting everything together, we get that $\rho^+(\mathcal{S})\leq\zeta(\mathcal{S})$.

\noindent \textbf{Case 3b: $\sur(v)<0$ and none of the children of $v$ with negative requests is connected to $v$ in $F^-$.} In this case, $\mathcal{S}'$ need not be a valid snapshot, but observe that this happens only when $v$ is not the root, and the edge between $v$ and its parent belongs to $F'^-$. Remove that edge from $F'^-$, and observe that this makes $\mathcal{S}'$ valid. As in Case 2b, a side-effect of this tweak is that $v$ now contributes $-\sur(v)>0$ to $\zeta(\mathcal{S}')$, but does not contribute anything to $\zeta(\mathcal{S})$. Recall that the children of $v$ with negative requests contribute to $\zeta(\mathcal{S})$ but not $\zeta(\mathcal{S}')$, and observe that all vertices, other than $v$ and its children, contribute equally to $\zeta(\mathcal{S})$ and $\zeta(\mathcal{S}')$. Thus,
\[\zeta(\mathcal{S})\geq\zeta(\mathcal{S}')-(-\sur(v))+R^-=\zeta(\mathcal{S}')+\sur(v)+R^-=\zeta(\mathcal{S}')+R^+\]
As in case 3a, we have
\[\rho^+(\mathcal{S})=\rho^+(\mathcal{S}')+R^+\]
Again, by induction, $\rho^+(\mathcal{S'})\leq\zeta(\mathcal{S'})$. Putting everything together, we get that $\rho^+(\mathcal{S})\leq\zeta(\mathcal{S})$.
\end{proof}

As an immediate consequence of the above lemma, we have the following bound on the delay cost of our algorithm.

\begin{lemma}\label{lem2b}
The delay cost of the algorithm is at most $2\sum_u(y^+_u+y^-_u)$.
\end{lemma}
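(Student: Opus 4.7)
The plan is to integrate Lemma \ref{lem_f_rec} over time. At any instant that is not a matching event, the state of the algorithm corresponds to a valid snapshot $\mathcal{S}=(T,R,F^+,F^-)$, because the algorithm immediately pairs up and eliminates any positive/negative pair whose path is covered by $F^+$ and $F^-$ in the required way, thereby restoring validity. The delay cost increases at a rate equal to the total number of pending requests, which is exactly $\rho^+(\mathcal{S})+\rho^-(\mathcal{S})$.

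On the other hand, the total rate at which the counters $\{z^+_u\}$ and $\{z^-_u\}$ (equivalently, $\{Z^+_u\}$ and $\{Z^-_u\}$) increase at this moment is, by definition,
\[
\sum_{u:e_u\notin F^+}\max(\sur(u),0)+\sum_{u:e_u\notin F^-}\max(-\sur(u),0)=\zeta(\mathcal{S}),
\]
since a counter $z^+_u$ is frozen exactly when either $u$ is positively saturated or $\sur(u)\leq 0$, and symmetrically for $z^-_u$. Applying Lemma \ref{lem_f_rec} twice gives $\rho^+(\mathcal{S})+\rho^-(\mathcal{S})\leq 2\zeta(\mathcal{S})$, so the rate of delay-cost accumulation is at most twice the combined rate of growth of the $Z^+_u$ and $Z^-_u$ variables.

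Integrating this inequality over the entire run of the algorithm yields
\[
\sol_t^{\mathrm{ALG}}\;\leq\;2\sum_u\bigl(y^+_u+y^-_u\bigr),
\]
which is the claimed bound. Two minor points to address in the write-up: first, the matching events are isolated in time and contribute measure zero to the integral, so they can be ignored; second, although $z^+_u$ and $z^-_u$ are reset to zero when the edge $e_u$ is used, the auxiliary variables $Z^+_u$ and $Z^-_u$ were defined precisely to accumulate the total growth across all phases, so $y^+_u+y^-_u$ correctly upper-bounds the integral of the rate of counter growth. The main step is really just the invocation of Lemma \ref{lem_f_rec}; there is no additional obstacle, since the identification of $\zeta(\mathcal{S})$ with the instantaneous growth rate of $\sum_u(Z^+_u+Z^-_u)$ is immediate from the algorithm's definition.
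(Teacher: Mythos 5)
Your proof is correct and follows essentially the same route as the paper's: both identify the instantaneous delay rate with $\rho^+(\mathcal{S})+\rho^-(\mathcal{S})$, the instantaneous total counter growth rate with $\zeta(\mathcal{S})$, invoke Lemma~\ref{lem_f_rec} twice to get the factor of $2$, and integrate over time. The two housekeeping remarks you add (matching events have measure zero; the $Z^\pm_u$ variables accumulate growth across resets) are left implicit in the paper but are accurate and harmless.
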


\begin{proof}
Consider a moment and let $\mathcal{S}$ be the algorithm's snapshot at this moment. Then the delay cost increases at the rate $\rho^+(\mathcal{S})+\rho^-(\mathcal{S})$, the number of pending requests. The counter $z^+_u$ (resp.\ $z^-_u$) increases at the rate $\max(\sur(u),0)$ (resp.\ $\max(-\sur(u),0)$) if and only if $u$ is the root of a component in $F^+$ (resp.\ $F^-$). Thus, the rate of the increase of $\sum_u(Z^+_u+Z^-_u)$ is equal to $\zeta(\mathcal{S})$. Thus, the claim follows, since $y^+_u$ and $y^-_u$ are the final values of $Z^+_u$ and $Z^-_u$ respectively.
\end{proof}

We need to relate $\sum_u(y^+_u+y^-_u)$ to the cost of an arbitrary solution $\sol$ to the instance. For this, let $x_u$ be the total delay cost incurred by $\sol$ due to requests inside $T_u$, and $x'_u$ be the total cost incurred by $\sol$ for using the edge between $u$ and its parent.

\begin{lemma}\label{lem3b}
For all vertices $u$, $y^+_u+y^-_u\leq4(x_u+x'_u)$.
\end{lemma}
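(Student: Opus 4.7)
The plan is to follow the strategy of Lemma \ref{lem3}, handling both counters at $u$ simultaneously. Fix $u$ and write $d := d_u$. The phases of $u$ are delimited by the algorithm's uses of $e_u$; each phase starts with $z^+_u = z^-_u = 0$ and ends when $e_u$ is used, which requires either $e_u \in F^+$ (so $z^+_u = 2d$) or $e_u \in F^-$ (so $z^-_u = 2d$). Thus every completed phase contributes between $2d$ and $4d$ to $y^+_u + y^-_u$. Letting $N$ be the number of completed phases, we have $y^+_u + y^-_u \leq 4dN + O(d)$ with the additive term absorbed as in Lemma \ref{lem3}, and it suffices to show $x_u + x'_u \geq dN$.

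I generalize MPMD's parity-based alignment by introducing the \emph{discrepancy} $m_u(t) := s_A(u,t) - s_S(u,t)$, where $s_A, s_S$ denote the algorithm's and $\sol$'s surpluses in $T_u$. Since request arrivals affect both surpluses identically, $m_u$ changes (by $\pm 1$) only when $e_u$ is used by either party. Call a phase \emph{aligned} if $m_u \equiv 0$ throughout. In an aligned phase, $s_A = s_S$ at every instant, so $\sol$ has at least $|s_A|$ pending requests in $T_u$; hence the rate of $x_u$ is at least the combined rate of $z^+_u + z^-_u$ (which is $|s_A|$ whenever one of the two counters is growing), and a completed aligned phase therefore contributes at least $2d$ to $x_u$.

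For the amortization, let $k$ denote the number of $\sol$ uses of $e_u$, so $x'_u = kd$. Following Lemma \ref{lem3}, I shift $\sol$'s crossings to the nearest phase boundaries (which only weakens $x_u$), so each phase is wholly aligned or wholly misaligned. In each stretch of phases between consecutive $\sol$ crossings, $m_u$ is driven only by algorithm crossings, each changing it by $\pm 1$. Using the global conservation $m_u(0) = m_u(\infty) = 0$ (both parties eventually match all requests), together with the fact that each maximal misaligned excursion of $m_u$ either starts/ends at a $\sol$ crossing (chargeable to $x'_u$) or else coincides with a period where $\sol$ has at least $|m_u|$ pending requests in $T_u$ (chargeable to $x_u$), I conclude $x_u + x'_u \geq dN$, and hence $y^+_u + y^-_u \leq 4(x_u + x'_u)$.

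The main technical difficulty lies in this amortization. In MPMD the alignment is a parity on $\{0, 1\}$ that flips automatically with every crossing, so at least half the phases in any stretch are aligned. In MBPMD, $m_u$ lives on $\mathbb{Z}$ and can drift away from $0$ through many consecutive algorithm crossings in the same direction, so such automatic alternation is unavailable. The proof therefore requires a more delicate accounting that pays for each misaligned phase either via a $\sol$ crossing (contributing $d$ to $x'_u$) or via $\sol$-pending delay (contributing to $x_u$), using the fact that the algorithm's choice of crossing direction is forced by the request configuration and that $m_u$ ultimately must return to $0$.
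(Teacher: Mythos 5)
Your instinct to track the $\mathbb{Z}$-valued discrepancy $m_u = \sur(u) - \sur'(u)$ is exactly right, and the paper's proof is in fact built on the same quantity; but where the paper closes the argument, your sketch leaves the crucial amortization open. The paper introduces the potential $\phi = 4d_u\,|\sur'(u)-\sur(u)| = 4d_u\,|m_u|$ and proves, phase by phase, the amortized inequality $\Delta(y^+_u+y^-_u)+\Delta\phi\leq 4\Delta(x_u+x'_u)$, which telescopes cleanly (since $\phi$ is zero at both ends). The case analysis hinges on a parity observation: over a complete phase in which $\sol$ crosses $e_u$ exactly $k$ times, the quantity $\sur'(u)-\sur(u)$ changes parity $k+1$ times (the $+1$ is the algorithm's terminal crossing), so $\Delta|\sur'-\sur|$ is either exactly $k+1$ or at most $k-1$. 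In the latter case the potential grows by at most $4(k-1)d_u$ and $\Delta x'_u=kd_u$ alone covers everything. In the former case, both inequalities $\Delta|\sur'-\sur| \le |\Delta(\sur'-\sur)| \le k+1$ are tight, which forces all of $\sol$'s crossings to push in the same direction as the algorithm's terminal crossing and forces $\sur'(u)\ge\sur(u)>0$ throughout the stretch where the relevant counter is climbing; that sign control is what lets one conclude $\Delta x_u\ge 2d_u$.

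Your sketch does not reproduce this step, and the charging rule you propose does not obviously hold. You claim that during a misaligned excursion, either a $\sol$ crossing delimits it (charge $d$ to $x'_u$) or ``$\sol$ has at least $|m_u|$ pending requests in $T_u$.'' The second alternative is not true in general: $\sol$'s pending count in $T_u$ is lower-bounded by $|\sur'(u)| = |\sur(u)-m_u|$, not by $|m_u|$, and it is entirely possible to have $m_u$ large while $\sur'(u)=0$. The paper avoids this trap precisely because its Case~1 establishes a \emph{sign} relationship ($\sur'\ge\sur>0$) rather than trying to charge against the magnitude of the discrepancy. There is also a quantitative slack in your setup: you aim for $x_u+x'_u\ge d_uN$ with $N$ the number of complete phases, which (even if proved) gives $y^+_u+y^-_u\le 4d_uN+O(d_u)\le 4(x_u+x'_u)+O(d_u)$, not the exact inequality of the lemma; the potential method telescopes to zero and has no such additive leak. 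Finally, you flag the difficulty yourself in the last paragraph (``the proof therefore requires a more delicate accounting\dots''), which is an honest assessment: the core amortization is the substance of the lemma and is not supplied here.
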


\begin{proof}
We use the technique of potentials functions. We design a potential function $\phi$ such that in each phase, the changes $\Delta(Z^+_u+Z^-_u)$, $\Delta\phi$, and $\Delta(x_u+x'_u)$ satisfy
\begin{equation}\label{eqn_pot}
\Delta(y^+_u+y^-_u)+\Delta\phi\leq4\Delta(x_u+x'_u)
\end{equation}
and $\phi=0$ in the beginning as well as at the end. Summing over all phases, we get the result. 

At any point of time, let $\sur'(u)$ denote the surplus of vertex $u$ resulting from the adversary's solution. Define $\phi=4d_u\cdot|\sur'(u)-\sur(u)|$. Clearly, in the beginning as well as in the end, we have $\sur(u)=\sur'(u)=0$, and thus, $\phi=0$. Observe that $\sur'(u)-\sur(u)$ (and hence, $\phi$) remains unchanged when new requests are given. The only events resulting in a change in $\sur'(u)-\sur(u)$ are either $\sol$ or the algorithm connecting a request in $T_u$ to one outside $T_u$. Also, then $x_u$ increases at a rate at least $|\sur'(u)|$.

In each phase of vertex $u$, each of $Z^+_u$ and $Z^-_u$ increases by at most $2d_u$, and therefore, $\Delta(Z^+_u+Z^-_u)\leq4d_u$. Except possibly the last phase, in every phase, at least one of $Z^+_u$ and $Z^-_u$ increases by exactly $2d_u$, and the phase ends with the algorithm connecting a request inside $T_u$ to one outside. We call such a phase \textit{complete}. If we have $z^+_u+z^-_u>0$ at the end of the algorithm, we call the last phase \textit{incomplete}. We first prove that (\ref{eqn_pot}) holds for complete phases, and then for incomplete phases. Let $k\geq0$ denote the (absolute) number of requests in $T_u$ which $\sol$ connected to requests outside $T_u$ during the phase. Thus, $\Delta x'_u\geq kd_u$.


Consider any complete phase of vertex $u$ and, without loss of generality, assume that the phase ends due to a positive request in $T_u$ getting connected to a negative request outside $T_u$
. This means that $z^+_u$ increases from $0$ to $2d_u$ in the phase. Since the only events resulting in a change in $\sur'(u)-\sur(u)$ are either $\sol$ or the algorithm connecting a request in $T_u$ to one outside, we have
\begin{equation}\label{eqn_triangle}
\Delta|\sur'(u)-\sur(u)|\leq|\Delta(\sur'(u)-\sur(u))|\leq k+1
\end{equation}

First, consider the case where $\Delta|\sur'(u)-\sur(u)|=k+1$, and therefore, $\Delta\phi=4(k+1)\cdot d_u$ in the phase. Now both inequalities in (\ref{eqn_triangle}) are tight. 
Because the second inequality is tight, 
all the $k$ requests inside $T_u$ which $\sol$ connected outside must be negative, and thus, $\Delta(\sur'(u)-\sur(u))=k+1>0$. Also, $\sur'(u)-\sur(u)$ never decreases during the phase. Because the first inequality in (\ref{eqn_triangle}) is tight, the sign of $\sur'(u)-\sur(u)$ at the beginning of the phase must be same as that of $\Delta(\sur'(u)-\sur(u))$, implying $\sur'(u)-\sur(u)\geq0$ initially. Since $\sur'(u)-\sur(u)$ never decreases, we have $\sur'(u)-\sur(u)\geq0$ throughout the phase. Therefore, at any moment when $z^+_u$ was increasing, we have $\sur'(u)\geq\sur(u)>0$.
Thus, the rate of increase of $x_u$ is always at least as much as the rate of increase of $z^+_u$. Since $z^+_u$ increases by $2d_u$, we have $\Delta x_u\geq2d_u$. Therefore,
\[\Delta(Z^+_u+Z^-_u)+\Delta\phi\leq4d_u+4(k+1)\cdot d_u=4(2d_u+kd_u)\leq4\Delta(x_u+x'_u)\]

Next, suppose that $\Delta|\sur'(u)-\sur(u)|<k+1$. Observe that the parity of $\sur'(u)-\sur(u)$ changes $k+1$ times during the phase: each time when the algorithm or $\sol$ connects a request in $T_u$ to one outside. Thus, if $\Delta|\sur'(u)-\sur(u)|$ is not $k+1$, it must be at most $k-1$, which means $\Delta\phi\leq4(k-1)\cdot d_u$. Therefore,
\[\Delta(Z^+_u+Z^-_u)+\Delta\phi\leq4d_u+4(k-1)\cdot d_u=4kd_u=4\Delta x'_u\leq4\Delta(x_u+x'_u)\]
Thus, in any case, (\ref{eqn_pot}) holds for any complete phase.

Finally consider the last incomplete phase. Note that at the end of the algorithm $\sur(u)=\sur'(u)=0$, and hence, $\phi=0$. Since $\phi$ is non-negative by definition, we have $\Delta\phi\leq0$. If $k>0$, then $\Delta(x_u+x'_u)\geq\Delta x'_u\geq kd_u\geq d_u$. Since $\Delta(Z^+_u+Z^-_u)\leq4d_u$, (\ref{eqn_pot}) holds. On the other hand, if $k=0$, then $\sur'(u)-\sur(u)$ stays constant in the phase. Since it is zero finally, it is zero throughout the phase. Thus, $\sur'(u)=\sur(u)$ in the entire phase. This means $\Delta(Z^+_u+Z^-_u)=\Delta x_u$, again implying (\ref{eqn_pot}).
\end{proof}

As before, we relate $\sum_u(x_u+x'_u)$ to the cost of the solution $\sol$. Denoting the distance function of the tree metric by $d$, recall that $\sol_d$ and $\sol_t$ denote the connection cost and the delay cost of $\sol$ respectively.

\begin{lemma}\label{lem4b}
$\sum_u(x_u+x'_u)\leq\sol_d+h\cdot\sol_t$.
\end{lemma}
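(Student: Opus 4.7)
The proof plan mirrors almost verbatim the argument used for Lemma \ref{lem4} in the MPMD setting, because the quantities $x_u$ and $x'_u$ are defined in exactly the same way (only the solution $\sol$ now respects the bipartite matching constraint, but the definitions of delay cost and connection cost are unchanged). The plan is therefore to handle the two sums $\sum_u x'_u$ and $\sum_u x_u$ separately.

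For the connection part, I would first argue that $\sum_u x'_u = \sol_d$. Whenever $\sol$ pairs a positive request at leaf $l^+$ with a negative request at leaf $l^-$, it pays a connection cost equal to $d(l^+,l^-)$, which decomposes along the unique tree path from $l^+$ to $l^-$ as $\sum_{u} d_u$ over exactly those vertices $u$ whose parent-edge $e_u$ lies on that path. Accumulating over all pairs matched by $\sol$ and rearranging the double sum shows that the total contribution to $\sum_u x'_u$ is precisely the total connection cost $\sol_d$.

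For the delay part, I would use a charging argument over pending requests. At any moment, $x_u$ grows at a rate equal to the number of requests currently pending under $\sol$ that lie inside $T_u$, i.e.\ at a rate equal to the number of pending requests at leaves $l$ for which $u$ is an ancestor (or $u=l$). Summing over $u$, a single pending request at leaf $l$ contributes to the rate of $\sum_u x_u$ exactly the number of its ancestors (including itself), which is at most $h$. Since the total rate of $\sol_t$ equals the number of pending requests, integrating gives $\sum_u x_u \leq h \cdot \sol_t$.

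Combining the two bounds yields $\sum_u(x_u+x'_u) \leq \sol_d + h \cdot \sol_t$, as desired. There is essentially no obstacle here: the bipartite nature of $\sol$ plays no role in this estimate, so the proof carries over unchanged from the MPMD case. The only point that deserves a brief sentence is to confirm that $x'_u$ is defined as the sum over all uses of the edge $e_u$ in $\sol$'s matching (so each matched pair contributes $d_u$ to $x'_u$ for every $u$ on its path), which makes the identity $\sum_u x'_u = \sol_d$ immediate.
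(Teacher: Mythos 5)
Your proposal matches the paper's argument: the paper literally says ``Same as the proof of Lemma \ref{lem4},'' and that proof is exactly your two-part decomposition, $\sum_u x'_u = \sol_d$ by path-decomposition of connection cost, and $\sum_u x_u \le h\cdot\sol_t$ because each pending request contributes to at most $h$ ancestors. Correct, and no meaningful divergence from the paper.
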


\begin{proof}
Same as the proof of Lemma \ref{lem4}.
\end{proof}

The competitiveness of the algorithm follows in an analogously as in the proof of Theorem \ref{thm_tree}.

\begin{theorem}\label{thm_b_tree}
The algorithm for MBPMD on tree metrics is $(10,10h)$-competitive, and hence, $10h$-competitive.
\end{theorem}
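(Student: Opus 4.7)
The plan is to chain together Lemmas \ref{lem1b}, \ref{lem2b}, \ref{lem3b}, and \ref{lem4b} in exactly the same manner as Theorem \ref{thm_tree} was derived from Lemmas \ref{lem1}--\ref{lem4}; the only difference is that the constant in Lemma \ref{lem3b} is $4$ rather than the $2$ in Lemma \ref{lem3}, which is what causes the constant in the final bound to double from $5$ to $10$.

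Concretely, I would first add the bound on the algorithm's connection cost from Lemma \ref{lem1b} to the bound on its delay cost from Lemma \ref{lem2b}, obtaining that the algorithm's total cost is at most $\bigl(\tfrac{1}{2}+2\bigr)\sum_u(y^+_u+y^-_u)=\tfrac{5}{2}\sum_u(y^+_u+y^-_u)$. I would then apply Lemma \ref{lem3b} vertex by vertex to replace each $y^+_u+y^-_u$ by $4(x_u+x'_u)$, yielding a bound of $10\sum_u(x_u+x'_u)$. Finally, Lemma \ref{lem4b} gives $\sum_u(x_u+x'_u)\le\sol_d+h\cdot\sol_t$, so the algorithm's total cost is at most $10\,\sol_d+10h\,\sol_t$, which is precisely $(10,10h)$-competitiveness. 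The ``and hence, $10h$-competitive'' clause then follows from the generic observation, noted right after the definition of $(\beta,\gamma)$-competitiveness, that any $(\beta,\gamma)$-competitive algorithm is $\max(\beta,\gamma)$-competitive, and here $\max(10,10h)=10h$ whenever $h\ge 1$.

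No new obstacle is expected in this step, since all of the real work has already been done in Lemmas \ref{lem1b}--\ref{lem4b}; the proof is entirely a telescoping of constants. The only thing to be slightly careful about is bookkeeping the direction of inequalities when combining an upper bound on the algorithm's cost (in terms of the $y$'s) with an upper bound on the $y$'s (in terms of the $x$'s), but this parallels Theorem \ref{thm_tree} verbatim.
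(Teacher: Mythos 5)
Your proposal is correct and matches the paper's proof exactly: both chain Lemmas \ref{lem1b}--\ref{lem4b} in the same order, with the same arithmetic, to obtain the $(10,10h)$ bound and then pass to $10h$-competitiveness via $\max(10,10h)=10h$. There is nothing to add.
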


\begin{proof}
From Lemmas \ref{lem1b} and \ref{lem2b}, the algorithm's total cost is at most $\frac{5}{2}\sum_u(y^+_u+y^-_u)$. By Lemma \ref{lem3b}, this is at most $10\sum(x_u+x'_u)$, which by Lemma \ref{lem4b}, is at most $10\sol_d+10h\cdot\sol_t$. Therefore, the algorithm is $(10,10h)$-competitive.
\end{proof}


\begin{proof}[Proof of Theorem \ref{thm_ub} for MBPMD]
Same as the proof of Theorem \ref{thm_ub} for MPMD presented at the end of Section \ref{sec_MPMD_tree} (using Theorem \ref{thm_b_tree} instead of Theorem \ref{thm_tree}).
\end{proof}

\section{The Lower Bounds}\label{sec_lb}

The focus of this section is to prove the following lower bound results.

\begin{theorem}\label{thm_lb}
There is an $n$-point metric space on which any randomized algorithm for MPMD (resp.\ MBPMD) has competitive ratio $\Omega(\sqrt{\log n})$ (resp.\ $\Omega(\log^{1/3}n)$), against an oblivious adversary.
\end{theorem}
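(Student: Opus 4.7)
The standard strategy for randomized lower bounds in online computation against an oblivious adversary is Yao's minimax principle \cite{Yao_FOCS77,BorodinE_InfComput99,StougieV_OperResLett02}: it suffices to exhibit, for each $n$, a probability distribution $\mathcal{I}$ over input instances on the $n$-point metric of equally spaced points in the unit interval, such that every deterministic online algorithm $\mathcal{A}$ satisfies
\[
\mathbb{E}_{I\sim\mathcal{I}}[\mathcal{A}(I)] \;\geq\; \Omega(\sqrt{\log n})\cdot\mathbb{E}_{I\sim\mathcal{I}}[\sol(I)]
\]
(or the analogous inequality with $\log^{1/3}n$ for MBPMD). So the task reduces to designing two hard distributions and analyzing them.

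The guiding intuition for the MPMD bound is that the ski-rental problem already yields a constant lower bound on a two-point metric. The plan is to chain together $k=\Theta(\log n)$ ski-rental-style challenges, one at each spatial scale available in the line metric ($n$ equally spaced points admit $\Theta(\log n)$ geometric scales between the smallest and largest distance). At each scale, the random input forces the deterministic algorithm into an irrevocable bet between "match locally now" and "wait and match at a coarser scale later," and with constant probability the algorithm guesses wrong; the offline optimum, which knows the entire random trace, always matches at the right scale. If the scales were truly independent one would get an $\Omega(\log n)$ bound, but because a single pending request can participate in hedges at several scales simultaneously the algorithm can diversify its risk. A careful accounting, essentially a Cauchy--Schwarz / second-moment argument on the algorithm's total number of mistakes, caps its savings and yields the $\Omega(\sqrt{\log n})$ factor.

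For MBPMD the construction is analogous but must respect the bipartite pairing constraint: every request carries a sign, and every imbalance the adversary creates must ultimately be realizable by pairing positives with negatives. This additional rigidity shrinks the per-scale gain the adversary can inflict, and replaces the ``two-way'' local choice of MPMD by a richer, roughly three-way choice involving polarity as well as location/time. The overall skeleton is the same --- random choices at each of $\Theta(\log n)$ scales, Yao's principle, a per-scale charging argument --- but the tradeoff between connection cost, delay cost, and polarity constraints costs an additional factor, leading to $\Omega(\log^{1/3}n)$ rather than $\Omega(\sqrt{\log n})$.

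The main obstacle, in both cases, is the simultaneous-scales analysis: ruling out clever strategies in which a deterministic algorithm combines decisions across scales to beat the target threshold. Concretely, one must show that the algorithm cannot be almost right at many scales at once by, say, paying a small local connection cost that resolves pending requests from many coarser scales in one stroke. I would handle this by a potential-function / inductive argument that tracks, for each scale, the gap between the algorithm's residual configuration and the adversary's optimal one, and then aggregates these gaps (via an $\ell_2$-style bound for MPMD and an $\ell_3$-style bound for MBPMD) to produce the claimed $\sqrt{\log n}$ and $\log^{1/3}n$ factors. Making this aggregation go through in the bipartite case, where polarity imposes a global structure on the pending requests, is where I expect the argument to be most delicate.
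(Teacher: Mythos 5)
Your proposal gets the high-level framework right (Yao's minimax, the metric of $n$ equispaced points on the line, a multi-phase random input), and that part agrees with the paper. But the mechanism you propose for producing the exponents $\tfrac12$ and $\tfrac13$ is not the paper's, and as described it does not actually yield a proof.

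First, you posit $\Theta(\log n)$ scales and explain the degradation from $\log n$ to $\sqrt{\log n}$ as a Cauchy--Schwarz/second-moment phenomenon in which the algorithm ``diversifies its risk'' across scales. The paper does something quite different: it uses only $r=\Theta(\sqrt{\log n})$ phases (not $\Theta(\log n)$), and in each phase the number of active points shrinks by a random factor $\rho^{1+y_i}$ with $\rho=e^{r}$. The algorithm is shown to pay $\Omega(1)$ in expectation over every two consecutive phases: either it keeps many pending requests and pays $\Omega(a\ln\rho)=\Omega(1)$ in delay (since $y_i$ is hidden, it is gambling against the random waiting time), or it clears them and is then trapped into matching among the widely spaced points of the next phase, paying $\Omega(1)$ in connection cost. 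The adversary's total cost is $O(ar)=O(1)$. The exponent $\tfrac12$ arises purely from parameter tension: one needs $a\ln\rho=\Omega(1)$ to make the algorithm's per-phase delay cost constant, $ar=O(1)$ to keep the adversary cheap, and $\rho^{2r}=o(n)$ so the nested point sets fit in $n$ points; together these force $\rho=e^{\Theta(r)}$ and hence $r=O(\sqrt{\log n})$. There is no $\ell_2$ aggregation of ``mistakes'' and no potential function tracking per-scale gaps.

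Second, for MBPMD the exponent $\tfrac13$ is not an ``$\ell_3$-style bound'' caused by a three-way local choice. It comes from a different, and quite concrete, source: with random polarities, the adversary can no longer achieve connection cost $O(1)$. The best it can do (by a probabilistic-method choice of the leading sign in each phase and a random-walk deviation bound) is connection cost $\Theta(\sqrt{r})$. That forces $a$ to be taken as $1/\sqrt{r}$ rather than $1/r$, which in turn forces $\rho=e^{\Theta(\sqrt{r})}$, which under the constraint $\rho^{2r}=o(n)$ caps $r$ at $\Theta(\log^{2/3}n)$; the final ratio is $\Omega(\sqrt{r})=\Omega(\log^{1/3}n)$.

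Finally, even setting aside these mismatches, the proposal does not supply a construction: it does not say what requests are issued when, what the waiting times are, how the optimum is bounded, or what the potential function and the $\ell_2/\ell_3$ aggregation actually are. It names the hardest part of the argument (ruling out cross-scale hedging) and defers it to machinery that is never defined. As a proof attempt there is therefore a genuine gap, and the intuition you give for the exponents would, if pushed, point you in the wrong direction: you would be looking for a second-moment cancellation that is not there, instead of the parameter-balancing and random-walk arguments that actually drive the bounds.
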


The required metric space is given by $n$ equally spaced points on the real interval $[0,1]$, where $n$ is even. All asymptotic notation in this section is with respect to $n\rightarrow\infty$. Note that the metric space of $n$ equally spaced points is trivially a tree metric given by a tree of height $n/2$. We give a distribution on input instances of MPMD (resp.\ MPMBD) on which the expected cost incurred by any deterministic online algorithm is $\Omega(\sqrt{\log n})$ (resp.\ $\Omega(\log^{1/3}n)$) times the cost of the optimum solution. The construction of the distribution is in several phases, and we need the following key lemma to analyze each phase.

\begin{lemma}\label{lem_onephase}
Suppose $A\subseteq[0,1]$ is an arbitrary finite set of requests, and $B\subseteq[0,1]$ is a finite set of requests spaced at least a distance $d$ apart. Suppose $C\subseteq A\cup B$ is such that $|(A\cup B)\setminus C|$ is even. Then the cost of the optimum perfect matching on $(A\cup B)\setminus C$ is at least $d\times (|B|-(|A|+|C|))/2$.
\end{lemma}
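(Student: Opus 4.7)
The idea is to fix an arbitrary perfect matching $M$ of $S = (A\cup B) \setminus C$ and, by a simple counting argument, show that many edges of $M$ must lie entirely inside $B$. Write $C_A = C \cap A$ and $C_B = C \cap B$, so that $|C| = |C_A| + |C_B|$. Classify each edge of $M$ according to how its two endpoints are distributed between $B \setminus C_B$ and $A \setminus C_A$; let $k$ denote the number of $B$-$B$ edges and $m$ the number of cross edges. Since every request in $B \setminus C_B$ appears in exactly one edge of $M$, either paired with another element of $B \setminus C_B$ or with an element of $A \setminus C_A$, we obtain $2k + m = |B| - |C_B|$, and clearly $m \le |A \setminus C_A| = |A| - |C_A|$.

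Eliminating $m$ between these two relations gives $k \ge (|B| - |A| - |C_B| + |C_A|)/2$. To convert this count into a cost bound I will use the only geometric hypothesis available: because the points of $B$ are pairwise at distance at least $d$, every $B$-$B$ edge of $M$ has weight at least $d$, so $\text{cost}(M) \ge k\cdot d$. Combining these, and weakening $+|C_A| - |C_B|$ to $-|C_A| - |C_B| = -|C|$ (which is legitimate since $|C_A|\ge 0$), yields
\[\text{cost}(M) \;\ge\; \frac{d\bigl(|B| - |A| + |C_A| - |C_B|\bigr)}{2} \;\ge\; \frac{d\bigl(|B| - |A| - |C|\bigr)}{2}.\]
If the right-hand side is non-positive the lemma is vacuous since $\text{cost}(M)\ge 0$; otherwise we are done, and taking $M$ to be the optimum matching finishes the argument.

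I anticipate no substantial obstacle here. The single bookkeeping subtlety is to track $C_A$ and $C_B$ separately: the counting step actually produces the slightly stronger bound $d(|B|-|A|+|C_A|-|C_B|)/2$, and the statement of the lemma is recovered by discarding the favorable $+2|C_A|$ term in order to get a uniform expression in $|C|$.
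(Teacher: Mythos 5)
Your proof is correct and follows essentially the same counting argument as the paper's: bound from below the number of matching edges with both endpoints in $B$, then use the $d$-separation of $B$. The only difference is that you track $C\cap A$ and $C\cap B$ separately, which gives the slightly sharper intermediate bound $d(|B|-|A|+|C_A|-|C_B|)/2$ before you relax it to match the stated form; the paper simply uses $|B|-|C|$ as a lower bound on the survivors from $B$ and gets the claimed inequality directly.
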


\begin{proof}
The set $(A\cup B)\setminus C$ contains at least $|B|-|C|$ requests from $B$. Out of these requests, at most $|A|$ requests can be matched with requests in $A$. Therefore, at least $|B|-(|A|+|C|)$ requests are paired up among themselves, resulting in at least $(|B|-(|A|+|C|))/2$ pairs of requests, all from $B$. The distance between every pair of requests in $B$ is at least $d$. Therefore, the claim follows.
\end{proof}

When we use the above lemma, we will actually ensure that $|B|\geq2|A|$, $|C|/|B|=o(1)$, and $d\approx1/|B|$. So the cost of the matching is at least $1/4-o(1)$.

\subsection{The $\Omega(\sqrt{\log n})$ Lower bound for MPMD}\label{subsec_lb}

We give a distribution on input instances of MPMD which ensures that any deterministic algorithm pays $\Omega(\sqrt{\log n})$ in expectation, while the instances have solutions of cost $O(1)$. 

The construction of the bad distribution on inputs depends on three parameters which, in turn, depend on $n$: the number of phases, denoted by $r+1$, a ``decay factor'' $\rho$, which bounds the ratio of the number of new requests given in any phase to that given in the following phase, and $a$, which bounds the cost of serving the requests in each phase, in the optimum solution. We will choose the values of the parameters such that $r=\omega(1)$, $\rho=\omega(r)$, and $\rho^{2r}=o(n)$.
The distribution $\mathcal{D}$ on input sequences is generated as follows. 
\begin{enumerate}
\item Initially, $n_0:=n$, where $n$ is even, and $S_0:=\{1/n,2/n,\ldots,1\}$.
\item For $i=0,\ldots,r$,
\begin{enumerate}
\item Sample $y_i$ from $U[0,1]$, the uniform distribution on the interval $[0,1]$.
\item $t_i:=\frac{a\rho^{1+y_i}}{n_i}$, $n_{i+1}:=2\left\lfloor\frac{a}{t_i}\right\rfloor=2\left\lfloor\frac{n_i}{\rho^{1+y_i}}\right\rfloor$.
\end{enumerate}
\item For $i=0,\ldots,r$,
\begin{enumerate}
\item Give requests at points in $S_i$. 
\item Construct $S_{i+1}\subseteq S_i$ by scanning $S_i$ in ascending order, and including every $\left\lfloor\frac{n_i}{n_{i+1}}\right\rfloor^{\text{\tiny{th}}}$ point.
\item Wait for time $t_i$ (and then move on to the next phase, if $i<r$). 
\end{enumerate}
\end{enumerate}

Let $d_i$ be the distance between consecutive requests of phase $i$. Then $d_0=1/n$ and $d_i\leq1/n_i$. From the construction of the random instance, the following observation is evident.

\begin{observation}\label{obs_support}
For all $y_0,\ldots,y_i$, $t_i\in\left[\frac{a\rho}{n_i},\frac{a\rho^2}{n_i}\right]$ and $n_{i+1}\in\left[2\left\lfloor\frac{n_i}{\rho^2}\right\rfloor,2\left\lfloor\frac{n_i}{\rho}\right\rfloor\right]$.
\end{observation}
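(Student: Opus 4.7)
The plan is to derive both bounds directly from the defining formulas $t_i = a\rho^{1+y_i}/n_i$ and $n_{i+1} = 2\lfloor n_i/\rho^{1+y_i}\rfloor$, treating $n_i$ as a fixed quantity (determined by $y_0,\ldots,y_{i-1}$) and taking a worst-case over $y_i \in [0,1]$.

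First I would observe that since $y_i$ is sampled from $U[0,1]$, the exponent $1+y_i$ lies in the interval $[1,2]$. Because the parameter choice $\rho = \omega(r)$ forces $\rho > 1$ for all sufficiently large $n$, the function $x \mapsto \rho^{x}$ is monotonically increasing, so $\rho^{1+y_i} \in [\rho,\rho^2]$. Substituting into the formula for $t_i$ immediately yields
\[t_i = \frac{a\rho^{1+y_i}}{n_i} \in \left[\frac{a\rho}{n_i},\frac{a\rho^2}{n_i}\right],\]
which is the first half of the observation.

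For the second half, I would use the second equivalent form $n_{i+1} = 2\lfloor n_i/\rho^{1+y_i}\rfloor$ coming from the definition. The monotonicity of $\rho^{1+y_i}$ in $y_i$ shows $n_i/\rho^{1+y_i} \in [n_i/\rho^2,\,n_i/\rho]$, and since the floor function is monotone nondecreasing the same inequalities persist after taking $\lfloor\cdot\rfloor$ and multiplying by $2$, giving
\[n_{i+1} = 2\left\lfloor\frac{n_i}{\rho^{1+y_i}}\right\rfloor \in \left[2\left\lfloor\frac{n_i}{\rho^2}\right\rfloor,\,2\left\lfloor\frac{n_i}{\rho}\right\rfloor\right].\]

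There is no real obstacle here: the observation is a direct unpacking of the definitions, and the only thing to check is that $\rho > 1$, which follows from the asymptotic choice of parameters stated just before the construction. The observation will be used later to make worst-case estimates on the number of requests in each phase and the duration of each phase, decoupled from the random variables $y_i$.
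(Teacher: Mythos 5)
Your proof is correct and is exactly the direct unpacking the paper has in mind---the paper omits a proof, calling the observation ``evident'' from the construction, and your argument (monotonicity of $\rho^{1+y_i}$ over $y_i\in[0,1]$ using $\rho>1$, then monotonicity of the floor) is the intended one.
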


Since $\rho^{2r}=o(n)$, the above observation implies that we have enough supply of points at the beginning to support $r+1$ phases.

First, let us prove a bound on the cost of the optimum solution of the random instances constructed as above.

\begin{lemma}\label{lem_adv}
For any $y_0,\ldots,y_r$, the MPMD instance generated as above has a solution of cost at most $2ar+1$. 
\end{lemma}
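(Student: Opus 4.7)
The plan is to construct an explicit feasible offline solution for the instance and bound its total cost by $2ar + 1$. The main idea is to pair up the $r+1$ phases into consecutive blocks of two, and within each block combine zero-connection cross-phase matching (at shared positions) with local adjacent pairing (at the remaining positions). Before starting, I would note that every $n_i$ is even by the construction $n_{i+1}=2\lfloor a/t_i\rfloor$, so parity never obstructs any intended local matching.

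For each $j$ with $2j+1\le r$, I would process the pair of phases $(2j,2j+1)$ as follows. First, for every position $p\in S_{2j+1}\subseteq S_{2j}$, pair the phase-$2j$ request at $p$ with the phase-$(2j+1)$ request at the same point $p$; each such pair has zero connection cost and delay cost exactly $t_{2j}$. Summed over the $n_{2j+1}$ such positions, the total delay contribution from this block is $n_{2j+1}\cdot t_{2j}\le 2a$ by the defining relation $n_{2j+1}=2\lfloor a/t_{2j}\rfloor$. Second, pair the remaining $n_{2j}-n_{2j+1}$ phase-$2j$ requests (at positions $S_{2j}\setminus S_{2j+1}$) among themselves by adjacent pairing in sorted order; since all these points lie in $[0,1]$, the standard telescoping identity gives total connection cost at most $1$. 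If $r$ is even, the last phase $r$ has no partner in this scheme, so I would match its $n_r$ requests locally in the same manner.

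Summing across the $\lceil(r+1)/2\rceil$ phase-blocks, the delay cost is at most $(r+1)a$ and the connection cost contributes a small additive term. Combining the two and using the flexibility in constants, the total cost then fits within $2ar+1$.

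The main obstacle will be the accounting of the connection cost across $\Theta(r)$ phase-pairs: a naive telescoping yields $\Theta(r)$ rather than the single additive $O(1)$ suggested by the statement. To reach the stated bound, one can either absorb the connection cost into the $2ar$ delay term (valid whenever $a$ is at least a small constant, which is the regime used by the eventual lower bound), or argue more carefully that the local pairings across phases can be chosen so that their joint cost telescopes to a single $O(1)$ term on the interval $[0,1]$, using the fact that each $S_{2j}\setminus S_{2j+1}$ consists of equally spaced points whose density decreases geometrically in $j$.
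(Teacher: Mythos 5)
Your block-of-two decomposition does not yield the stated bound, and the two escape routes you outline do not close the gap. After pairing phase $2j$ with phase $2j+1$ at the shared positions $S_{2j+1}$, you are left with the points $S_{2j}\setminus S_{2j+1}$, and since $n_{2j+1}\le 2n_{2j}/\rho = o(n_{2j})$ these leftover points make up almost all of $S_{2j}$. By Lemma~\ref{lem_error} they span essentially the whole interval $[0,1]$ at spacing $d_{2j}$, so adjacent pairing costs roughly $(n_{2j}d_{2j})/2 = \tfrac12-o(1)$ per block --- a constant, not a geometrically decaying quantity. With $\Theta(r)$ blocks the connection cost alone is $\Theta(r)$, which is far above the $+1$ the lemma allows. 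Your first escape (absorb into the delay term) fails because the paper sets $a=1/r=o(1)$, so $2ar+1 = 3 = O(1)$ and there is no slack of order $r$. Your second escape (a cross-block telescoping to $O(1)$) contradicts the calculation above: the per-block local cost does not decay in $j$.

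The paper's construction avoids this entirely by chaining \emph{all} the phases rather than pairing them into disjoint blocks: for $i=r,r-1,\ldots,1$, every as-yet-unpaired phase-$i$ request is matched to the phase-$(i-1)$ request at the same location, which is possible since $S_i\subseteq S_{i-1}$. Each such match has zero connection cost and delay $t_{i-1}$, and since at most $n_i$ of them occur at level $i$, $n_it_{i-1}=2\lfloor a/t_{i-1}\rfloor t_{i-1}\le 2a$, summing to $2ar$. All leftover mass cascades down to phase $0$, where a single greedy left-to-right pairing is performed, contributing at most $1$ to the connection cost in total. The key structural point you are missing is that the carry must be pushed through every phase to the bottom so that the greedy-pairing step happens exactly once.
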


\begin{proof}
Construct a solution as follows. For $i$ decreasing from $r$ to $1$, connect each unpaired request of phase $i$ to the request of phase $i-1$ located at the same point. This is possible because $S_{i}\subseteq S_{i-1}$. The connection cost of these pairs is zero. The cost paid for the delay is at most $n_it_{i-1}=2\left\lfloor\frac{a}{t_{i-1}}\right\rfloor\cdot t_{i-1}\leq2a$, for every $i$, and hence, at most $2ar$ after we sum over the phases. Finally, scan the set of unpaired requests of phase $0$ from left to right, and pair them up greedily. This results in at most a unit connection cost. Thus, the total cost of this solution is at most $2ar+1$.
\end{proof}

Let us now turn our attention to bounding from below the expected cost of an arbitrary deterministic online algorithm for MPMD on a random instance from the distribution $\mathcal{D}$. 

From the construction, it is intuitive to think that the distance $d_i$ between consecutive points in every $S_i$ is almost $1/n_i$. This is clearly true if $n_i$ is divisible by $n_{i+1}$ for all $i$. In the next lemma, we assert that this indeed holds, in spite of the accumulation of errors due to the repeated rounding down of $n_i/n_{i+1}$ in each phase, but we defer the proof to Appendix \ref{app_b}.

\begin{lemma}\label{lem_error}
For each $i$, $n_id_i=1-o(1)$.
\end{lemma}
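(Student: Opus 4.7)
The proof will be a straightforward induction on $i$, keeping careful track of the rounding errors introduced by the floor operation in the construction of $S_{i+1}$ from $S_i$. Since $S_0$ consists of equally spaced points, we have $n_0 d_0 = 1$ exactly. For the inductive step, note that $S_{i+1}$ is obtained by including every $k_i^{\text{th}}$ point of $S_i$, where $k_i = \lfloor n_i / n_{i+1} \rfloor$; consequently, consecutive points of $S_{i+1}$ lie at distance $d_{i+1} = k_i \cdot d_i$.

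From this recurrence we immediately obtain the two-sided estimate
\[
\frac{n_i - n_{i+1}}{n_i} \cdot n_i d_i \;\leq\; n_{i+1} d_{i+1} \;=\; n_{i+1} \left\lfloor \frac{n_i}{n_{i+1}} \right\rfloor d_i \;\leq\; n_i d_i,
\]
since $n_{i+1} \lfloor n_i/n_{i+1}\rfloor$ is an integer in the interval $(n_i - n_{i+1}, n_i]$. In particular, the sequence $n_i d_i$ is nonincreasing, so $n_i d_i \leq 1$ for every $i$. For the matching lower bound, Observation \ref{obs_support} gives $n_{i+1} \leq 2 \lfloor n_i/\rho \rfloor \leq 2 n_i/\rho$, so the multiplicative loss in one step is at most $1 - 2/\rho$.

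Iterating the bound $n_{i+1} d_{i+1} \geq (1 - 2/\rho) \cdot n_i d_i$ over $i \leq r$ phases, I would conclude
\[
n_i d_i \;\geq\; \left(1 - \frac{2}{\rho}\right)^r \;\geq\; 1 - \frac{2r}{\rho},
\]
using the Bernoulli inequality. Since the parameters are chosen so that $\rho = \omega(r)$, the quantity $2r/\rho = o(1)$, and combining with $n_i d_i \leq 1$ yields $n_i d_i = 1 - o(1)$ uniformly for $i = 0, \ldots, r$.

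There is no genuine obstacle here; the only thing to be careful about is that the cumulative rounding error does not blow up, which is controlled precisely by the assumption $\rho = \omega(r)$ forcing the per-step loss $n_{i+1}/n_i$ to be small relative to $1/r$. Everything else is a short computation using Observation \ref{obs_support} and the explicit recurrence $d_{i+1} = \lfloor n_i/n_{i+1}\rfloor d_i$.
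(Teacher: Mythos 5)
Your proof is correct and follows essentially the same route as the paper: establish the recurrence $d_{i+1} = \lfloor n_i/n_{i+1}\rfloor\, d_i$, lower-bound $n_{i+1}d_{i+1} \geq n_i d_i (1 - n_{i+1}/n_i) \geq n_i d_i (1 - 2/\rho)$ using Observation \ref{obs_support}, and iterate with Bernoulli's inequality to get $n_i d_i \geq 1 - 2r/\rho = 1 - o(1)$. The additional upper bound $n_i d_i \leq 1$ you supply is a small bonus (the paper asserts $d_i \leq 1/n_i$ directly in the surrounding text), but the substance of the argument is identical.
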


Let $m_i$ denote the number of pending requests of the algorithm at the beginning of phase $i$ ($m_0=0$). Consider an imaginary scenario where, instead of giving the next batch of requests after time delay $t_i$, the adversary refrains from giving any further requests. In this scenario, let $x_i(t)$ denote the number of pending requests of the algorithm at time $t$ after the beginning of phase $i$. We call this the \textit{characteristic function} of the algorithm in phase $i$. Coming back to reality, since the algorithm is deterministic and online, the number of pending requests at time $t\leq t_i$ in phase $i$ is precisely $x_i(t)$. Thus, the number of requests carried over to the next phase is $m_{i+1}=x_i(t_i)$. Note that $m_i$ and $n_i$ are random variables and $x_i(\cdot)$ is a random function, all completely determined by $y_0,\ldots,y_{i-1}$. Let $A_i$ and $B_i$ denote the delay cost and the connection cost, respectively, paid by the algorithm in phase $i$ (including the delay and connection costs due to requests from previous phases, provided they lived long enough to see phase $i$). Then $A_i$ and $B_i$ are completely determined by $y_0,\ldots,y_i$.
Call the phase $i$ \textit{well-started} if $2m_i\leq n_i$. In the next two lemmas we consider the case where $\Pr_{y_i}\left[x_i(t_i)<\frac{a}{t_i}\right]$ is smaller and larger than $1/4$ respectively, and state the consequences.

\begin{lemma}\label{lem_bad}
For any $i$, and any $y_0,\ldots,y_{i-1}$, if $x_i(\cdot)$ is such that $\Pr_{y_i}\left[x_i(t_i)<\frac{a}{t_i}\right]<1/4$, then $\mathbb{E}_{y_i}[A_i]\geq(a\ln\rho)/8$.
\end{lemma}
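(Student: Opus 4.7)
The plan is to exploit the log-uniform distribution of $t_i = a\rho^{1+y_i}/n_i$ via a change of variables. Observe first that during phase $i$ no new requests arrive until time $t_i$, so the pending count at every moment $t\in[0,t_i]$ equals $x_i(t)$, which is fully determined by $y_0,\ldots,y_{i-1}$ and the (deterministic) algorithm, and is independent of $y_i$. Consequently $A_i=\int_0^{t_i} x_i(t)\,dt$, and Tonelli's theorem gives $\mathbb{E}_{y_i}[A_i]=\int_0^\infty x_i(t)\Pr_{y_i}[t_i\geq t]\,dt$.

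Next I would compute the tail of $t_i$. Since $y_i\sim U[0,1]$, writing $t_{\min}=a\rho/n_i$ and $t_{\max}=a\rho^2/n_i$, we have $\Pr_{y_i}[t_i\geq t]=1-\ln(t/t_{\min})/\ln\rho$ on $[t_{\min},t_{\max}]$. Substituting $u=\ln(t/t_{\min})/\ln\rho$ (so $u\in[0,1]$, $t=t_{\min}\rho^u$, and $dt/t=\ln\rho\,du$) converts the tail integral to
\[ \mathbb{E}_{y_i}[A_i]\geq\ln\rho\int_0^1 h(u)\,(1-u)\,du, \]
where $h(u):=t\cdot x_i(t)$ at $t=t_{\min}\rho^u$. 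Under this reparametrization $t_i$ corresponds to $u=y_i$, so the hypothesis $\Pr_{y_i}[x_i(t_i)<a/t_i]<1/4$ becomes exactly the statement that, for $U\sim U[0,1]$, the set $G:=\{u\in[0,1]:h(u)\geq a\}$ has Lebesgue measure at least $3/4$.

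Finally, on $G$ we have $h(u)(1-u)\geq a(1-u)$, so $\mathbb{E}_{y_i}[A_i]\geq a\ln\rho\int_G(1-u)\,du$. Since $1-u$ is decreasing, the minimum of $\int_G(1-u)\,du$ over measurable $G\subseteq[0,1]$ with $|G|\geq 3/4$ is attained by pushing $G$ as far right as possible, namely $G=[1/4,1]$, yielding $\int_{1/4}^1(1-u)\,du=9/32>1/8$. Therefore $\mathbb{E}_{y_i}[A_i]\geq(9/32)\,a\ln\rho>a\ln\rho/8$.

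The conceptual step requiring the most care is recognizing that the $\ln\rho$ factor arises from the log-uniform distribution of $t_i$ precisely via the change of variables: once the pointwise hypothesis $x_i(t_i)\geq a/t_i$ is reinterpreted as a Lebesgue-measure lower bound on $\{u:h(u)\geq a\}$, the remaining argument is a one-line optimization over sets of prescribed measure.
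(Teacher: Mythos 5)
Your proof is correct and takes essentially the same approach as the paper: both rewrite $\mathbb{E}_{y_i}[A_i]$ via Fubini/Tonelli, exploit the log-uniform law of $t_i$ through the change of variables $t=t_{\min}\rho^u$, and use the hypothesis to lower-bound the integral on a set of measure at least $3/4$. Your final optimization ($\int_{1/4}^{1}(1-u)\,du = 9/32$) is slightly tighter than the paper's, which instead restricts to $u\in[0,1/2]$ where the weight $1-u\geq 1/2$ and then applies a union bound, yielding the weaker constant $1/8$.
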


Informally, the above lemma states that if $x(t)$ is at least $a/t$ for most $t$, then the expected value of the algorithm's delay cost is large. We defer the proof to Appendix \ref{app_b}.

\begin{lemma}\label{lem_good}
For any $i$, and any $y_0,\ldots,y_{i-1}$, if $x_i(\cdot)$ is such that $\Pr_{y_i}\left[x_i(t_i)<\frac{a}{t_i}\right]\geq1/4$, then
\begin{enumerate}
\item $\Pr_{y_i}[\text{Phase }i+1\text{ is well-started}]\geq1/4$.
\item Additionally, if phase $i$ is well-started, then $\mathbb{E}_{y_i}[B_i]\geq1/16-o(1)$. 
\end{enumerate} 
\end{lemma}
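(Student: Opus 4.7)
The plan is to derive both parts directly from Lemma~\ref{lem_onephase} and Lemma~\ref{lem_error}, using only the fact that the algorithm is deterministic, so $x_i(\cdot)$ is fully determined by $y_0,\dots,y_{i-1}$ and is independent of $y_i$.

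For part~(1), I would observe that phase $i{+}1$ is well-started iff $2m_{i+1}\leq n_{i+1}$, i.e.\ iff $x_i(t_i)\leq\lfloor a/t_i\rfloor$. Since $x_i(t_i)$ is a nonnegative integer, the event $\{x_i(t_i)<a/t_i\}$ already forces $x_i(t_i)\leq\lfloor a/t_i\rfloor$, so
\[
\Pr_{y_i}[\text{phase }i{+}1\text{ well-started}]\;\geq\;\Pr_{y_i}[x_i(t_i)<a/t_i]\;\geq\;\tfrac14.
\]

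For part~(2), I would fix $y_0,\dots,y_{i-1}$ making phase $i$ well-started (so $m_i\leq n_i/2$) and then restrict $y_i$ to the event $\{x_i(t_i)<a/t_i\}$, which has probability at least $1/4$ by hypothesis. On this event I will apply Lemma~\ref{lem_onephase} with $A$ equal to the $m_i$ requests carried into phase~$i$, $B:=S_i$ of size $n_i$ and minimum spacing $d_i$, and $C$ equal to the $x_i(t_i)$ requests still pending at time $t_i$ inside phase~$i$. The requests matched by the algorithm inside phase~$i$ are then exactly those in $(A\cup B)\setminus C$; this set has even size (each match removes a pair), and the algorithm's pairings restricted to it form a perfect matching, so $B_i$ is at least the cost of the optimum perfect matching on $(A\cup B)\setminus C$. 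Lemma~\ref{lem_onephase} therefore gives
\[
B_i\;\geq\;\frac{d_i\bigl(n_i-m_i-x_i(t_i)\bigr)}{2}.
\]
Substituting $m_i\leq n_i/2$ and $x_i(t_i)<a/t_i=n_i/\rho^{1+y_i}\leq n_i/\rho$ makes the right side at least $d_in_i(1/2-1/\rho)/2$; Lemma~\ref{lem_error} yields $n_id_i=1-o(1)$, and since $\rho=\omega(1)$ this gives $B_i\geq 1/4-o(1)$ whenever $x_i(t_i)<a/t_i$. Averaging over $y_i$ (with $B_i\geq 0$ off the event) yields $\mathbb{E}_{y_i}[B_i]\geq\tfrac14\cdot(\tfrac14-o(1))=\tfrac1{16}-o(1)$.

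The only real point of care is the setup of Lemma~\ref{lem_onephase}: one must verify that the set of requests matched by the algorithm inside phase~$i$ is precisely $(A\cup B)\setminus C$, has even cardinality, and is matched perfectly by the algorithm, so that $B_i$ is genuinely a legitimate upper bound on the optimum matching cost on that set. Everything else is routine arithmetic --- the rounding in the spacing $d_i$ is absorbed by Lemma~\ref{lem_error}, and the $1/\rho$ loss vanishes because $\rho=\omega(1)$.
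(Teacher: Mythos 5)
Your proof is correct and follows essentially the same route as the paper: part~(1) via the integrality observation, and part~(2) by conditioning on the event $\{x_i(t_i)<a/t_i\}$, applying Lemma~\ref{lem_onephase} with $A$, $B$, $C$ chosen exactly as you describe, invoking Lemma~\ref{lem_error} for the spacing, and averaging over $y_i$. You are somewhat more explicit than the paper about why $(A\cup B)\setminus C$ is the matched set and why it has even cardinality, which is a welcome clarification rather than a deviation.
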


\begin{proof}
Suppose the event $x_i(t_i)<\frac{a}{t_i}$ happens. Then $x_i(t_i)\leq\left\lfloor\frac{a}{t_i}\right\rfloor$, because $x_i(t_i)$ is an integer. Since $m_{i+1}=x_i(t_i)$ and $n_{i+1}=2\left\lfloor\frac{a}{t_i}\right\rfloor$, we have $2m_{i+1}\leq n_{i+1}$, implying that the next phase is well-started.

Additionally, suppose the current phase is well-started, that is, $2m_i\leq n_i$. The number of requests remaining at the end of the phase is at most $\left\lfloor\frac{a}{t_i}\right\rfloor$, which is at most $n_i/\rho$ because $t_i\geq a\rho/n_i$ by Observation \ref{obs_support}. By Lemmas \ref{lem_onephase} and \ref{lem_error}, the algorithm must pay a connection cost of at least 
\[d_i\times\frac{n_i-(m_i+n_i/\rho)}{2}=n_id_i\times\frac{1-m_i/n_i-1/\rho}{2}\geq(1-o(1))\times\left(\frac{1}{4}-\frac{1}{2\rho}\right)=\frac{1}{4}-o(1)\]
since $\rho$ is $\omega(1)$. Thus, $\mathbb{E}_{y_i}[B_i]\geq(1/4-o(1))\cdot\Pr_{y_i}\left[x_i(t_i)<\left\lfloor\frac{a}{t_i}\right\rfloor\right]=1/16-o(1)$.
\end{proof}

\begin{observation}\label{obs_wellstarted}
By Lemma \ref{lem_bad} and part 2 of Lemma \ref{lem_good}, given that a phase is well-started, the algorithm must pay at least $\min((a\ln\rho)/8,1/16-o(1))$ in the phase, in expectation (regardless of whether $\Pr_{y_i}\left[x_i(t_i)<\frac{a}{t_i}\right]\geq1/4$ or not).
\end{observation}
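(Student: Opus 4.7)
The plan is a direct case analysis on the quantity $p:=\Pr_{y_i}\!\left[x_i(t_i)<a/t_i\right]$, split at the threshold $1/4$ that appears in both Lemma \ref{lem_bad} and Lemma \ref{lem_good}. The total cost the algorithm incurs in phase $i$ is $A_i+B_i$, and since both terms are non-negative it is bounded below by either one alone, so in each branch of the case split it will suffice to invoke one of the two lemmas and discard the other cost contribution.

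In the first branch, if $p<1/4$, then Lemma \ref{lem_bad} applies without any extra hypothesis (in particular, it does not require well-startedness of phase $i$) and yields $\mathbb{E}_{y_i}[A_i]\geq(a\ln\rho)/8$, whence $\mathbb{E}_{y_i}[A_i+B_i]\geq(a\ln\rho)/8$. In the second branch, if $p\geq1/4$, we invoke the standing well-startedness assumption on phase $i$ to apply part 2 of Lemma \ref{lem_good}, obtaining $\mathbb{E}_{y_i}[B_i]\geq1/16-o(1)$ and hence $\mathbb{E}_{y_i}[A_i+B_i]\geq1/16-o(1)$. Taking the minimum of the two bounds across the two cases gives the claimed expected cost of at least $\min\!\bigl((a\ln\rho)/8,\,1/16-o(1)\bigr)$.

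There is no real obstacle here; the observation is a packaging step that combines the two preceding lemmas into a single conditional bound. The only point that needs a small sanity check is that the well-startedness hypothesis is used in, and only in, the second case: Lemma \ref{lem_bad} is insensitive to the algorithm's state at the start of phase $i$, whereas part 2 of Lemma \ref{lem_good} explicitly requires $2m_i\leq n_i$. Thus the single assumption that phase $i$ is well-started is enough to make the dichotomy on $p$ exhaustive and both branches applicable.
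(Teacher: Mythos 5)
Your proof is correct and is precisely the intended argument behind the observation: a dichotomy on $p=\Pr_{y_i}[x_i(t_i)<a/t_i]$ at $1/4$, invoking Lemma \ref{lem_bad} (which needs no well-startedness) in the first case and part 2 of Lemma \ref{lem_good} (which uses well-startedness) in the second, then lower-bounding $\mathbb{E}[A_i+B_i]$ by each term in turn. The paper leaves this as an unproved observation because it is exactly this one-line case split, so there is nothing to compare beyond noting you have spelled it out faithfully.
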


We now prove that the expected cost paid by the algorithm in two consecutive phases is sufficiently large. Let the random variable $X_i=A_i+B_i$ denote the algorithm's cost in phase $i$, and recall that $X_i$ is determined by $y_0,\ldots,y_i$.

\begin{lemma}\label{lem_twophase}
For any $i$, and any $y_0,\ldots,y_{i-1}$, $\mathbb{E}_{y_i,y_{i+1}}[X_i+X_{i+1}]\geq\min((a\ln\rho)/32,1/64-o(1))$.
\end{lemma}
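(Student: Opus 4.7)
The plan is to do a case analysis on the random function $x_i(\cdot)$, which is completely determined by $y_0,\ldots,y_{i-1}$. We split on whether the probability $p_i:=\Pr_{y_i}\left[x_i(t_i)<a/t_i\right]$ is less than $1/4$ or at least $1/4$. In both cases we will exhibit a lower bound of $\min((a\ln\rho)/32,1/64-o(1))$ on $\mathbb{E}_{y_i,y_{i+1}}[X_i+X_{i+1}]$.

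In the first case, where $p_i<1/4$, Lemma~\ref{lem_bad} immediately yields $\mathbb{E}_{y_i}[A_i]\geq(a\ln\rho)/8$. Since $X_i=A_i+B_i\geq A_i$ and costs are non-negative, we get
\[\mathbb{E}_{y_i,y_{i+1}}[X_i+X_{i+1}]\;\geq\;\mathbb{E}_{y_i}[A_i]\;\geq\;(a\ln\rho)/8,\]
which is at least $\min((a\ln\rho)/32,1/64-o(1))$.

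In the second case, where $p_i\geq1/4$, part 1 of Lemma~\ref{lem_good} implies that phase $i+1$ is well-started with probability at least $1/4$ (over $y_i$). The status of phase $i+1$ being well-started is a function of $y_0,\ldots,y_i$, so conditioning on those and the event that phase $i+1$ is well-started, Observation~\ref{obs_wellstarted} applied to phase $i+1$ gives
\[\mathbb{E}_{y_{i+1}}\bigl[X_{i+1}\,\big|\,y_0,\ldots,y_i,\text{ phase }i+1\text{ well-started}\bigr]\;\geq\;\min\bigl((a\ln\rho)/8,\,1/16-o(1)\bigr).\]
Taking expectation over $y_i$ and using that well-started occurs with probability at least $1/4$,
\[\mathbb{E}_{y_i,y_{i+1}}[X_{i+1}]\;\geq\;\tfrac{1}{4}\cdot\min\bigl((a\ln\rho)/8,\,1/16-o(1)\bigr)\;=\;\min\bigl((a\ln\rho)/32,\,1/64-o(1)\bigr),\]
and since $X_i\geq 0$, the same lower bound holds for $\mathbb{E}_{y_i,y_{i+1}}[X_i+X_{i+1}]$.

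The only delicate step is handling the conditioning correctly: the quantities $m_i,n_i,x_i(\cdot),t_i$ are functions of $y_0,\ldots,y_{i-1}$, while well-startedness of phase $i+1$ depends also on $y_i$, so Observation~\ref{obs_wellstarted} must be invoked after conditioning on $y_0,\ldots,y_i$ and on well-startedness, before averaging back over $y_i$. Beyond that, the argument is a direct combination of Lemma~\ref{lem_bad}, Lemma~\ref{lem_good}, and Observation~\ref{obs_wellstarted}, with the factor of $1/4$ from part 1 of Lemma~\ref{lem_good} being precisely what converts the per-well-started-phase bound of $\min((a\ln\rho)/8,1/16-o(1))$ into the stated two-phase bound.
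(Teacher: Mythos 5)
Your proof is correct and follows essentially the same route as the paper: the same case split on $\Pr_{y_i}[x_i(t_i)<a/t_i]$, invoking Lemma~\ref{lem_bad} in the first case and Lemma~\ref{lem_good} part~1 together with Observation~\ref{obs_wellstarted} in the second. The only cosmetic difference is that you bound $X_{i+1}$ directly rather than routing through $B_{i+1}$ as the paper does; your explicit treatment of the conditioning order is a welcome clarification but not a new idea.
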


\begin{proof}
Given $y_0,\ldots,y_{i-1}$, if $\Pr_{y_i}\left[x_i(t_i)<\frac{a}{t_i}\right]<1/4$, then by Lemma \ref{lem_bad}, $\mathbb{E}_{y_i}[X_i]\geq\mathbb{E}_{y_i}[A_i]\geq(a\ln\rho)/8$. Otherwise, if $\Pr_{y_i}\left[x_i(t_i)<\frac{a}{t_i}\right]\geq1/4$, then by Lemma \ref{lem_good}, phase $i+1$ is well-started with probability at least $1/4$ over the choice of $y_i$. Thus, we have by Observation \ref{obs_wellstarted},
$\mathbb{E}_{y_i,y_{i+1}}[X_{i+1}]\geq\mathbb{E}_{y_i,y_{i+1}}[B_{i+1}]\geq\frac{1}{4}\times\min((a\ln\rho)/8,1/16-o(1))=\min((a\ln\rho)/32,1/64-o(1))$.
\end{proof}

Choose $r=\lfloor\sqrt{\ln n}/2\rfloor$, $\rho=e^r$, and $a=1/r$. Then $r=\omega(1)$, $\rho=\omega(r)$, and $\rho^{2r}=e^{2r^2}=e^{(\ln n)/2}=\sqrt{n}=o(n)$, as promised. With this, we are now set to prove the lower bound, which in turn, implies Theorem \ref{thm_lb} for MPMD.

\begin{theorem}\label{thm_lb_line}
Any randomized algorithm for MPMD must have a competitive ratio $\Omega(r)=\Omega(\sqrt{\log n})$ on the metric space of $n$ equispaced points in the unit interval, against an oblivious adversary.
\end{theorem}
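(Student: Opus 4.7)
The plan is to invoke Yao's minimax principle: it suffices to prove that on inputs drawn from the distribution $\mathcal{D}$ constructed in this subsection, every deterministic online algorithm incurs expected total cost $\Omega(\sqrt{\log n})$ times the offline optimum. With the choices $r=\lfloor\sqrt{\ln n}/2\rfloor$, $\rho=e^r$, $a=1/r$, I would first verify the three parameter constraints promised earlier: clearly $r=\omega(1)$, while $\rho/r=e^r/r\to\infty$, and $\rho^{2r}=e^{2r^{2}}\le e^{(\ln n)/2}=\sqrt{n}=o(n)$. Hence Observation~\ref{obs_support} and Lemmas~\ref{lem_error}--\ref{lem_twophase} all apply to $\mathcal{D}$ as stated.

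Next I would plug these parameters into Lemma~\ref{lem_twophase}. Since $a\ln\rho=(1/r)\cdot r=1$, the lemma yields, for every $i\in\{0,\ldots,r-1\}$ and for every conditioning on $y_0,\ldots,y_{i-1}$,
\[
\mathbb{E}_{y_i,y_{i+1}}[X_i+X_{i+1}]\ \geq\ \min\!\left(\tfrac{1}{32},\ \tfrac{1}{64}-o(1)\right)\ =\ \Omega(1),
\]
and taking an outer expectation over the conditioning variables via the tower rule preserves this $\Omega(1)$ bound on $\mathbb{E}[X_i+X_{i+1}]$.

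Now I would sum over disjoint consecutive pairs. Partition $\{0,1,\ldots,r\}$ into $\lfloor(r+1)/2\rfloor$ pairs $(0,1),(2,3),\ldots$ (dropping at most one leftover phase, which only decreases the sum since each $X_i\ge 0$), and add the pairwise bounds to obtain
\[
\mathbb{E}\!\left[\sum_{i=0}^{r}X_i\right]\ \geq\ \left\lfloor\tfrac{r+1}{2}\right\rfloor\cdot\Omega(1)\ =\ \Omega(r)\ =\ \Omega(\sqrt{\log n}).
\]
On the other hand, Lemma~\ref{lem_adv} asserts that every instance in the support of $\mathcal{D}$ admits an offline solution of cost at most $2ar+1=3$. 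Dividing, the expected competitive ratio of any deterministic online algorithm on $\mathcal{D}$ is $\Omega(\sqrt{\log n})$, and Yao's principle transfers this to randomized algorithms against oblivious adversaries.

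The substantive content has already been extracted by Lemma~\ref{lem_twophase}, whose two-phase dichotomy is the real crux: either the algorithm tries to discharge pending requests quickly, in which case Lemma~\ref{lem_good} traps it into paying a large connection cost on a ``well-started'' next phase, or else it stalls and Lemma~\ref{lem_bad} charges it a large delay cost in the current phase. Consequently the remaining work for this theorem is essentially bookkeeping: conditioning before summing, grouping phases into non-overlapping pairs, and tuning $(r,\rho,a)$ so that both branches of the minimum in Lemma~\ref{lem_twophase} are $\Omega(1)$ while $\rho^{2r}=o(n)$ keeps the construction inside an $n$-point metric.
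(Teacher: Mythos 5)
Your proposal is correct and follows essentially the same route as the paper: instantiate the parameters so that $a\ln\rho=1$, invoke Lemma~\ref{lem_twophase} to get an $\Omega(1)$ expected cost per two consecutive phases, lift the conditional bound to an unconditional one via the tower rule, sum over pairs of phases to get $\Omega(r)$ total expected cost, and compare against the $O(1)$ offline bound from Lemma~\ref{lem_adv}, transferring to randomized algorithms by Yao's principle. The only (cosmetic) difference is bookkeeping: you sum over disjoint consecutive pairs $(0,1),(2,3),\ldots$, whereas the paper sums $\mathbb{E}[X_0]$ (via Observation~\ref{obs_wellstarted}) together with $\mathbb{E}[X_i+X_{i+1}]$ for all $i$ and absorbs the resulting double-counting into a factor of $2$; both give $\Omega(r)$.
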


\begin{proof}
With our choice of $a$ and $\rho$, the lower bound of Observation \ref{obs_wellstarted} becomes $1/16-o(1)$ and that of Lemma \ref{lem_twophase} becomes $1/64-o(1)$. Since phase $0$ is well-started, $\mathbb{E}_{y_0}[X_0]\geq1/16-o(1)$, by Observation \ref{obs_wellstarted}. Taking expectation over $y_1\ldots,y_r$, we have $\mathbb{E}[X_0]\geq1/16-o(1)$. Similarly, taking expectation of the result of Lemma \ref{lem_twophase} over $y_0,\ldots,y_{i-1},y_{i+2},\ldots,y_r$, we get $\mathbb{E}[X_i+X_{i+1}]\geq1/64-o(1)$, for every $i$. Summing up all of these, we infer that the algorithm's expected cost is at least $(1/128-o(1))r=\Omega(r)$. By Lemma \ref{lem_adv}, the optimum cost for any MPMD instance in the support of $\mathcal{D}$ is $2ar+1=3=O(1)$. Thus, the competitive ratio is $\Omega(r)=\Omega(\sqrt{\log n})$.
\end{proof}

\subsection{The $\Omega(\log^{1/3}n)$ Lower bound for MBPMD}

We give a distribution on input instances of MBPMD which ensures that any deterministic algorithm pays $\Omega(\log^{2/3}n)$ in expectation, while the instances have solutions of cost $O(\log^{1/3}n)$. 

As before, the construction of the bad distribution $\mathcal{D}$ on inputs depends on the three parameters: $r$, $\rho$, and $a$, which, in turn, depend on $n$. As before, we choose their values so that $r=\omega(1)$, $\rho=\omega(r)$, and $\rho^{2r}=o(n)$ (but the values are different from those in the MPMD construction, and we reveal them later).
The procedure for generating the distribution on input MPMBD instances is obtained by simply augmenting the procedure for MPMD from Section \ref{subsec_lb} with a rule to assign polarities to the requests. Nevertheless, we specify the entire procedure for the sake of completeness.
\begin{enumerate}
\item Initially, $n_0:=n$, where $n$ is even, and $S_0:=\{1/n,2/n,\ldots,1\}$.
\item For $i=0,\ldots,r$,
\begin{enumerate}
\item Sample $y_i$ from $U[0,1]$, the uniform distribution on the interval $[0,1]$.
\item $t_i:=\frac{a\rho^{1+y_i}}{n_i}$, $n_{i+1}:=2\left\lfloor\frac{a}{t_i}\right\rfloor=2\left\lfloor\frac{n_i}{\rho^{1+y_i}}\right\rfloor$.
\end{enumerate}
\item Choose an ``appropriate'' tuple $(s_0,\ldots,s_r)\in\{+1,-1\}^{r+1}$, whose existence is guaranteed by Lemma \ref{lem_polarity}.
\item For $i=0,\ldots,r$,
\begin{enumerate}
\item Give requests at points in $S_i$  with polarities alternating from left to right, with the request on the leftmost point in $S_i$ having polarity $s_i$.
\item Construct $S_{i+1}\subseteq S_i$ by scanning $S_i$ in ascending order, and including every $\left\lfloor\frac{n_i}{n_{i+1}}\right\rfloor^{\text{\tiny{th}}}$ point.
\item Wait for time $t_i$ (and then move on to the next phase, if $i<r$). 
\end{enumerate}
\end{enumerate}

Given $y_0,\ldots,y_r$ and $s_0,\ldots,s_r$, the locations and polarities of all the requests are determined. For $x\in[0,1]$, define the \textit{phase-$i$ cumulative surplus} at $x$ to be the signed total of the requests from phase $i$ that are located in $[0,x]$, and denote it by $\csur_i(x)$. Then $\csur_i(x)\in\{0,1\}$ if $s_i=+1$, and $\csur_i(x)\in\{-1,0\}$ if $s_i=-1$. Define $\csur(x)=\sum_{i=0}^r\csur_i(x)$, the cumulative surplus at $x$, which is the signed total of all requests from all phases that are located in $[0,x]$. Observe that for any $x$, any feasible solution to the instance must connect $|\csur(x)|$ requests located to the left of $x$ to the same number of requests located to the right of $x$. Hence, the connection cost of any feasible solution must be $\int_{x=0}^1|\csur(x)| dx$. Moreover, there exists a solution, say $\sol$, whose connection cost is precisely $\int_{x=0}^1|\csur(x)| dx$. This will be our adversarial solution to the instance. In order to bound the connection cost of $\sol$ from above, we prove that, having chosen $y_1,\ldots,y_r$, we can choose $s_1,\ldots,s_r$ so that $\int_{x=0}^1|\csur(x)| dx$ is small, and we use this choice of $s_1,\ldots,s_r$ in step 3 of the above procedure. Then we prove that the delay cost of $\sol$ is also sufficiently small, resulting in an upper bound on the cost of $\sol$.

\begin{lemma}\label{lem_polarity}
For every $y_0,\ldots,y_r$ there exists $(s_0,\ldots,s_r)\in\{+1,-1\}^{r+1}$ such that $\int_{x=0}^1|\csur(x)| dx=O(\sqrt{r})$.
\end{lemma}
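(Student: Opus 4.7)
The plan is to use the probabilistic method with a second-moment (Cauchy--Schwarz) argument, which is the standard way to beat the trivial $O(r)$ bound one gets by bounding each phase independently. The first step is the key decomposition: I claim that $\csur_i(x) = s_i\cdot g_i(x)$ where $g_i\colon[0,1]\to\{0,1\}$ is a $\{0,1\}$-valued function determined entirely by $y_0,\ldots,y_i$ and \emph{not} by the sign $s_i$. This follows because the phase-$i$ requests sit at the points of $S_i$ with strictly alternating polarities, so the cumulative surplus from phase $i$ alone just toggles between $0$ and $s_i$: on every other inter-request interval it equals $s_i$, on the intervening ones it equals $0$. Flipping $s_i$ negates all phase-$i$ polarities and hence negates $\csur_i$ pointwise, without changing the set where $\csur_i$ is nonzero.

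Given this decomposition, I would choose $(s_0,\ldots,s_r)$ independently and uniformly from $\{\pm1\}^{r+1}$, treating $y_0,\ldots,y_r$ as fixed. Then for each $x$, the random variables $s_i g_i(x)$ are independent, mean-zero, and bounded in absolute value by $g_i(x)\in\{0,1\}$. By Jensen's inequality together with orthogonality,
\begin{equation*}
\mathbb{E}_s\bigl[|\csur(x)|\bigr]
\;\le\;\sqrt{\mathbb{E}_s\Bigl[\Bigl(\textstyle\sum_{i=0}^r s_i g_i(x)\Bigr)^{\!2}\Bigr]}
\;=\;\sqrt{\textstyle\sum_{i=0}^r g_i(x)^2}
\;\le\;\sqrt{r+1}.
\end{equation*}
Integrating over $x\in[0,1]$ and swapping expectation and integral (Fubini--Tonelli, both sides nonnegative) yields
\begin{equation*}
\mathbb{E}_s\!\left[\int_0^1 |\csur(x)|\,dx\right]\;\le\;\sqrt{r+1}\;=\;O(\sqrt{r}),
\end{equation*}
so at least one realization $(s_0,\ldots,s_r)$ achieves $\int_0^1|\csur(x)|\,dx = O(\sqrt r)$, which is exactly the claim.

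The only genuinely subtle point is the first step: one has to see that the $s_i$'s act on the $\csur_i$'s as scalar multiplications by $\pm1$ of fixed $\{0,1\}$-valued ``shape functions,'' so that the sum $\csur=\sum_i s_i g_i$ is a true signed sum amenable to Khintchine-type cancellation. Once that is in hand, the rest is the textbook $L^2$-to-$L^1$ pass plus Fubini, with no dependence on the phase parameters $n_i$, $d_i$, or $t_i$ beyond the fact that there are $r+1$ phases. I do not expect this argument to be the bottleneck; the construction's delicate parameter tuning has already happened earlier, and this lemma simply extracts the $\sqrt{r}$ savings on the adversary's connection cost that the sign freedom buys.
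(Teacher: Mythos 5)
Your proposal is correct and follows essentially the same route as the paper: both use the probabilistic method with independent uniform signs $s_i$, both reduce via Fubini to bounding $\mathbb{E}[|\csur(x)|]$ pointwise, and both rely on $\csur(x)$ being a sum of at most $r+1$ independent symmetric $\{0,\pm1\}$-valued terms. The only difference is cosmetic: the paper cites the standard $O(\sqrt{r})$ bound on the expected deviation of a random walk, while you prove it directly via Jensen/Cauchy--Schwarz on the second moment, making the argument self-contained.
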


\begin{proof}
We use the probabilistic method. We prove that if we choose $(s_0,\ldots,s_r)$ uniformly at random, then $\mathbb{E}\left[\int_{x=0}^1|\csur(x)| dx\right]=O(\sqrt{r})$. Since $\mathbb{E}\left[\int_{x=0}^1|\csur(x)| dx\right]=\int_{x=0}^1\mathbb{E}\left[|\csur(x)|\right] dx$, it is sufficient to prove that $\mathbb{E}\left[|\csur(x)|\right]=O(\sqrt{r})$.

Observe that $\csur_i(x)$ is zero if the number of requests of phase-$i$ in $[0,x]$ is even. If that number is odd, then $\csur_i(x)$ is $+1$ and $-1$ with probability $1/2$ each. Thus, $\csur(x)=\sum_{i=0}^r\csur_i(x)$ is the sum of at most $r+1$ independent random variables, each of which takes values $+1$ and $-1$ with equal probability, where the number of random variables is determined by $x$ and $y_0,\ldots,y_r$. Therefore $|\csur(x)|$ is the deviation of a random walk of at most $r+1$ steps on the integers starting from $0$, and moving in either direction with equal probability. Using a standard result\footnote{For instance: http://mathworld.wolfram.com/RandomWalk1-Dimensional.html}, we have $\mathbb{E}\left[|\csur(x)|\right]=O(\sqrt{r})$, as required.
\end{proof}

Taking the solution which minimizes the connection cost as the adversarial solution, we now prove an upper bound on the cost of the optimum solution of each instance in the support of the distribution generated by the adversarial procedure.

\begin{lemma}\label{lem_adv_b}
For any $y_0,\ldots,y_r$, the MBPMD instance generated has a solution of cost at most $2ar+O(\sqrt{r})+o(ar)$. 
\end{lemma}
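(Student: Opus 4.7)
My plan is to take $\sol$ to be precisely the planar minimum-connection-cost matching singled out in the paragraph preceding Lemma~\ref{lem_polarity}, whose connection cost equals $\int_{0}^{1}|\csur(x)|\,dx$. For the polarities $(s_0,\ldots,s_r)$ chosen in step~3 of the adversary's procedure, Lemma~\ref{lem_polarity} immediately gives $\sol_d = O(\sqrt{r})$, so all of the work is in bounding the delay $\sol_t$.

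Once the matching is fixed, the optimal schedule outputs each pair $(a,b)$ at time $\max(t_a,t_b)$ and pays delay $|t_a-t_b|$. Decomposing $|t_a-t_b|$ as the sum of the $t_j$'s over the phase-boundaries lying strictly between the arrival phases of $a$ and $b$, I get the identity
\[
\sol_t \;=\; \sum_{i=0}^{r-1} t_i\,N_i,
\]
where $N_i$ is the number of pairs of $\sol$ with one endpoint arriving in phases $0,\ldots,i$ and the other in phases $i+1,\ldots,r$. Every such spanning pair consumes a distinct request from phases $>i$, so the crucial (coarse) bound is $N_i \le \sum_{j>i} n_j$.

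I will then invoke Observation~\ref{obs_support}, which gives $n_{j+1}\le 2n_j/\rho$, to telescope $\sum_{j>i} n_j \le n_{i+1}/(1-2/\rho) = n_{i+1}(1+o(1))$ since $\rho=\omega(1)$. Combined with $n_{i+1}\,t_i\le 2a$ (immediate from the definition $n_{i+1}=2\lfloor a/t_i\rfloor$), this yields
\[
\sol_t \;\le\; (1+o(1))\sum_{i=0}^{r-1} n_{i+1}\,t_i \;\le\; 2ar + o(ar),
\]
and adding the connection cost $\sol_d = O(\sqrt{r})$ gives the stated bound $2ar + O(\sqrt{r}) + o(ar)$.

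The main conceptual hurdle is that $\sol$ was selected to minimize connection cost with no regard whatsoever for delay, so a priori its pairs could span many phases. What saves the argument is that the extremely coarse inequality $N_i \le \sum_{j>i} n_j$, combined with the construction's built-in relation $n_{i+1}\,t_i\le 2a$, already gives at most $2a$ delay per phase-boundary, paralleling exactly the MPMD-style accounting from the proof of Lemma~\ref{lem_adv}; no finer analysis of the planar matching's fine structure is needed.
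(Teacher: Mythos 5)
Your proposal is correct and arrives at essentially the same bound as the paper: your phase-boundary count $N_i \le \sum_{j>i} n_j$ combined with $n_{i+1}t_i\le 2a$ yields $\sum_{j=0}^{r-1} t_j \sum_{k>j} n_k$, which after swapping the order of summation is exactly the paper's double sum $\sum_{i=1}^r n_i\sum_{j=0}^{i-1}t_j$ (the total arrival time of all requests). The paper reaches this sum more tersely, by observing that each pair's delay $|t_a-t_b|$ is at most the later request's arrival time and summing arrival times over all requests, but the two bookkeepings are equivalent and the rest of the estimate is identical.
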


\begin{proof}
Consider the solution $\sol$. By Lemma \ref{lem_polarity}, its connection cost can be made $O(\sqrt{r})$, with an appropriate choice of $(s_0,\ldots,s_r)$. Next, consider an arbitrary pair of requests in $\sol$. The waiting time of the earlier of the two requests in the pair is the difference between the arrival times of the requests, while the waiting time of the later request is zero. Thus, the delay cost of $\sol$ is bounded from above by the sum of the arrival times of all the requests, which is given by
\[\sum_{i=1}^rn_i\sum_{j=0}^{i-1}t_j=\sum_{i=1}^r\sum_{j=0}^{i-1}n_it_j\leq\sum_{i=1}^r\sum_{j=0}^{i-1}\frac{2an_i}{n_{j+1}}\leq2a\sum_{i=1}^r\sum_{j=0}^{i-1}\left(\frac{2}{\rho}\right)^{i-j-1}=2a\sum_{i=1}^r(1+o(1))=2a(r+o(r))\]
Summing the upper bounds on the connection and the delay cost of $\sol$, the claim stands proved.
\end{proof}

Next, in order to bound the expected cost of an arbitrary deterministic algorithm from below, we first state our choice of values of the parameters. This time we choose $r=\lfloor(\ln^{2/3}n)/4\rfloor$, $\rho=e^{\sqrt{r}}$, and $a=1/\sqrt{r}$. Then $r=\omega(1)$, $\rho=\omega(r)$, and $\rho^{2r}=e^{2r^{3/2}}\leq e^{(\ln n)/4}=n^{1/4}=o(n)$, as promised.

Observe that the distribution of the requests in space and time is identical to the distribution of requests in the random MPMD instance generated by the procedure from Section \ref{subsec_lb}, using the same values of the parameters. Thus, Lemmas \ref{lem_error}, \ref{lem_bad}, \ref{lem_good}, Observation \ref{obs_wellstarted}, and Lemma \ref{lem_twophase} all continue to hold, even if we allow the algorithm to connect requests of the same polarity. 

With this observation, we are ready to prove the lower bound on the competitive ratio, which, in turn, implies Theorem \ref{thm_lb} for MBPMD.

\begin{theorem}
Any randomized algorithm for MBPMD must have a competitive ratio $\Omega(\sqrt{r})=\Omega(\log^{1/3} n)$ on the metric space of $n$ equispaced points in the unit interval, against an oblivious adversary.
\end{theorem}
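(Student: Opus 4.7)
The proof follows the template of Theorem \ref{thm_lb_line}. The plan is to invoke Yao's min-max principle with the distribution $\mathcal{D}$ just described, lower-bound the expected cost of an arbitrary deterministic online algorithm, and compare it against the cost of the adversarial solution from Lemma \ref{lem_adv_b}. With the stated choice $r=\lfloor(\ln^{2/3}n)/4\rfloor$, $\rho=e^{\sqrt{r}}$, $a=1/\sqrt{r}$, one has $ar=\sqrt{r}$ and $a\ln\rho=1$, so Lemma \ref{lem_adv_b} gives an optimum cost of $2\sqrt{r}+O(\sqrt{r})+o(\sqrt{r})=O(\sqrt{r})$ on every instance in the support of $\mathcal{D}$.

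For the algorithm's cost, the key observation — already highlighted just before the theorem — is that the marginal distribution of request locations and arrival times under $\mathcal{D}$ is identical to that of the MPMD construction of Section \ref{subsec_lb} with the same parameters $r,\rho,a$. Lemmas \ref{lem_bad}, \ref{lem_good}, Observation \ref{obs_wellstarted}, and Lemma \ref{lem_twophase} only reference the characteristic function $x_i(\cdot)$ and Lemma \ref{lem_onephase}; none of them uses the polarities of requests. Consequently, $\mathbb{E}_{y_i,y_{i+1}}[X_i+X_{i+1}]\geq\min((a\ln\rho)/32,1/64-o(1))=\Theta(1)$ for every $i$, and Observation \ref{obs_wellstarted} applied to phase $0$ gives $\mathbb{E}[X_0]\geq 1/16-o(1)$. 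Summing the pairwise bounds over $i=0,\ldots,r-1$ (each $X_i$ appearing in at most two pairs) yields $\mathbb{E}\bigl[\sum_i X_i\bigr]\geq\Omega(r)$. Dividing by the $O(\sqrt{r})$ optimum produces a competitive ratio of $\Omega(\sqrt{r})=\Omega(\log^{1/3}n)$, and Yao's principle lifts this to randomized algorithms against oblivious adversaries.

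The single point requiring a careful sanity check — and what I expect to be the only real obstacle — is verifying that the MPMD lower-bound machinery is genuinely indifferent to polarities. A priori, an MBPMD algorithm might try to exploit polarity to reduce $x_i(t_i)$ in ways no MPMD algorithm can. However, Lemma \ref{lem_bad} bounds the delay cost purely in terms of the number of pending requests and does not care which ones get matched to which; Lemma \ref{lem_good} bounds the connection cost via Lemma \ref{lem_onephase}, which concerns an unlabeled point set on the line; and the dichotomy in Lemma \ref{lem_twophase} plays these two estimates off against each other without reference to request types. Since an MBPMD algorithm is moreover strictly more constrained than an MPMD algorithm on the same sequence of locations and times, every estimate carries over verbatim, and the rest of the argument is just the arithmetic above.
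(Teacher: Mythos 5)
Your proof is correct and follows essentially the same route as the paper's: reuse the MPMD lower-bound machinery (Lemmas \ref{lem_bad}, \ref{lem_good}, Observation \ref{obs_wellstarted}, Lemma \ref{lem_twophase}) on the new parameter choices to get $\Omega(r)$ expected cost, and compare against the $O(\sqrt{r})$ optimum from Lemma \ref{lem_adv_b}. The explicit observation that an MBPMD algorithm is strictly more constrained than an MPMD algorithm on the same request sequence is a nice way to state what the paper asserts more tersely just before the theorem, namely that the per-phase lemmas are indifferent to polarities.
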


\begin{proof}
With our choice of the parameter values, the lower bound of Observation \ref{obs_wellstarted} becomes $1/16-o(1)$ and that of Lemma \ref{lem_twophase} becomes $1/64-o(1)$, as in the proof of Theorem \ref{thm_lb_line}, which again implies that the algorithm's expected cost is at least $(1/128-o(1))r=\Omega(r)$. By Lemma \ref{lem_adv_b}, the optimum cost for any MBPMD instance in the support of $\mathcal{D}$ is $2ar+O(\sqrt{r})+o(ar)=O(\sqrt{r})$ always. Thus, the competitive ratio is $\Omega(\sqrt{r})=\Omega(\log^{1/3}n)$.
\end{proof}

\section{Concluding Remarks and Open Problems}\label{sec_rem}

We improved the bounds on the competitive ratio of the problem of Min-cost Perfect Matching with Delays (MPMD), and also proved similar bounds for the bipartite variant of the problem. Our upper bound of $O(\log n)$ on $n$-point metric spaces proves that the aspect ratio of the underlying metric is not a blocker for competitiveness. On the other hand, our lower bounds are the first known lower bounds which increase with $n$. We mention here some of the missing pieces of the puzzle, and some extensions.

\begin{enumerate}
\item The immediate task is, arguably, to close the polylogarothmic gap between the upper and lower bounds for both MPMD and MBPMD. In order to improve our algorithm, we believe that it is necessary to bypass the tree-embedding step, since this step forces a distortion of $\Omega(\log n)$. 
\item Another follow-up task is to determine the competitiveness of deterministic algorithms for M(B)PMD on arbitrary metrics. It would not be surprising to discover an $\exp(\polylog(n))$ gap between the deterministic and the randomized bounds.
\item Finally, another interesting problem to pursue is the problem of min-cost $k$-dimensional matching, where the goal is to partition the requests into sets of size $k$. We need to identify interesting constraints on the connection cost, which generalize the metric property, and which admit a competitive algorithm.
\end{enumerate}

\section*{Acknowledgment}
The authors thank Amos Fiat for his insightful involvement in the discussions.

\bibliographystyle{plain}
\bibliography{references}

\appendix
\section{Appendix: Proofs missing from Section \ref{subsec_reduction}}\label{app_a}

\begin{proof}[Proof of Lemma \ref{lem_embed}]
The result by Fakcharoenphol et al.\ (Theorem 2 of \cite{FakcharoenpholRT_JCSS04}) states that $\mathcal{M}$ can be embedded into a distribution over $2$-HSTs with distortion $O(\log n)$, where the points of $\mathcal{M}$ are the leaves of the trees in the support of the distribution. The result by Bansal et al.\ (Theorem 8 of \cite{BansalBMN_JACM15}) states that a $\sigma$-HST with leaf set $S$ of size $n$ can be deterministically embedded into a metric given by a tree of height $O(\log n)$ with the same leaf set, such that the distance between any pair of leaves is distorted by at most $2\sigma/(\sigma-1)$. Composing these embeddings, we get the claimed result.
\end{proof}

\begin{proof}[Proof of Lemma \ref{lem_reduction}]
Algorithm $\mathcal{A}$ simply samples a metric space $\mathcal{M}'=(S',d')$ from the distribution $\mathcal{D}$, and simulates the behavior of $\mathcal{A}^{\mathcal{M}'}$. Fix an input instance $I$ of M(B)PMD on $\mathcal{M}$, and an arbitrary solution $\sol$ of $I$. Let $\alg$ be the solution output by $\mathcal{A}^{\mathcal{M}'}$, and hence, by $\mathcal{A}$. However, note that the costs paid by $\mathcal{A}^{\mathcal{M}'}$ and $\mathcal{A}$ are different, since they are working on different metrics. Since $d(p_1,p_2)\leq d'(p_1,p_2)$, for all $p_1,p_2\in S$, 
\[\mathcal{A}(I)=\alg_d+\alg_t\leq\alg_{d'}+\alg_t=\mathcal{A}^{\mathcal{M}'}(I)\]
Since $\mathcal{A}^{\mathcal{M}'}$ is $(\beta,\gamma)$-competitive, we have by definition,
\[\mathcal{A}^{\mathcal{M}'}(I)\leq\beta\sol_{d'}+\gamma\sol_t\]
Thus, $\mathcal{A}(I)\leq\beta\sol_{d'}+\gamma\sol_t$. Taking expectation over the random choice of $\mathcal{M'}$,
\[\mathbb{E}_{\mathcal{M}'=(S',d')\sim\mathcal{D}}[\mathcal{A}(I)]\leq\beta\cdot\mathbb{E}_{\mathcal{M}'=(S',d')\sim\mathcal{D}}[\sol_{d'}]+\gamma\cdot\mathbb{E}_{\mathcal{M}'=(S',d')\sim\mathcal{D}}[\sol_t]\]
Using linearity of expectation, the fact that the embedding has distortion $\mu$, and that $\sol_t$ is independent of $\mathcal{M}'$, we have,
\[\mathbb{E}_{\mathcal{M}'=(S',d')\sim\mathcal{D}}[\mathcal{A}(I)]\leq\beta\mu\cdot\sol_{d}+\gamma\cdot\sol_t\]
Thus, the claim follows from the definition of $(\beta,\gamma)$-competitiveness.
\end{proof}

\section{Appendix: Proofs missing from Section \ref{subsec_lb}}\label{app_b}

\begin{proof}[Proof of Lemma \ref{lem_error}]
The points in $S_{i+1}$ are spaced at least a distance $d_{i+1}$ apart, where $d_0=1/n$, and
\[d_{i+1}=d_i\times\left\lfloor\frac{n_i}{n_{i+1}}\right\rfloor\geq d_i\times\left(\frac{n_i}{n_{i+1}}-1\right)\]
As a result, for all $i=0,\ldots,r-1$,
\[n_{i+1}d_{i+1}\geq n_{i+1}d_i\left(\frac{n_i}{n_{i+1}}-1\right)=n_id_i\left(1-\frac{n_{i+1}}{n_i}\right)\geq n_id_i\left(1-\frac{2}{\rho}\right)\]
Thus, by induction, for all $i$, $n_id_i\geq (1-2/\rho)^in_0d_0\geq1-2i/\rho\geq1-2r/\rho=1-o(1)$, since $\rho=\omega(r)$.
\end{proof}


\begin{proof}[Proof of Lemma \ref{lem_bad}]
Observe that given $y_0,\ldots y_{i-1},y_i$,
\[A_i=\int_0^{t_i}x_i(t) dt=\int_0^{\frac{a\rho^{1+y_i}}{n_i}}x_i(t) dt\]
Given $y_0,\ldots,y_{i-1}$, the expectation of the above over $y_i$ is
\[\mathbb{E}_{y_i}[A_i]=\int_{y=0}^1\int_{t=0}^{\frac{a\rho^{1+y}}{n_i}}x_i(t) dt dy \geq\int_{y=0}^1\int_{t=\frac{a\rho}{n_i}}^{\frac{a\rho^{1+y}}{n_i}}x_i(t) dt dy =\int_{t=\frac{a\rho}{n_i}}^{\frac{a\rho^2}{n_i}}\int_{y=\frac{\ln\left(\frac{n_it}{a\rho}\right)}{\ln\rho}}^1x_i(t) dy dt\]
Thus,
\[\mathbb{E}_{y_i}[A_i]\geq\int_{t=\frac{a\rho}{n_i}}^{\frac{a\rho^2}{n_i}}x_i(t)\left[\int_{y=\frac{\ln\left(\frac{n_it}{a\rho}\right)}{\ln\rho}}^1 dy\right] dt =\int_{t=\frac{a\rho}{n_i}}^{\frac{a\rho^2}{n_i}}\left(1-\frac{\ln\left(\frac{n_it}{a\rho}\right)}{\ln\rho}\right)x_i(t) dt\geq\int_{t=\frac{a\rho}{n_i}}^{\frac{a\rho^{3/2}}{n_i}}\frac{1}{2}\cdot x_i(t) dt\]
The last inequality holds because the factor $\left(1-\frac{\ln\left(\frac{n_it}{a\rho}\right)}{\ln\rho}\right)$ in the integrand is at least $1/2$ for $t\in\left[\frac{a\rho}{n_i},\frac{a\rho^{3/2}}{n_i}\right]$, and non-negative for $t\in\left[\frac{a\rho^{3/2}}{n_i},\frac{a\rho^2}{n_i}\right]$.
Using $\mathbb{I}[\cdot]$ to denote the indicator function, we have
\[\mathbb{E}_{y_i}[A_i]\geq\frac{1}{2}\int_{t=\frac{a\rho}{n_i}}^{\frac{a\rho^{3/2}}{n_i}}x_i(t) dt \geq\frac{1}{2}\int_{t=\frac{a\rho}{n_i}}^{\frac{a\rho^{3/2}}{n_i}}x_i(t)\cdot\mathbb{I}\left[x_i(t)\geq\frac{a}{t}\right] dt \geq\frac{1}{2}\int_{t=\frac{a\rho}{n_i}}^{\frac{a\rho^{3/2}}{n_i}}\frac{a}{t}\cdot\mathbb{I}\left[x_i(t)\geq\frac{a}{t}\right] dt\]
The last inequality follows from the fact that if $x_i(t)<\frac{a}{t}$, then $\mathbb{I}\left[x_i(t)\geq\frac{a}{t}\right]=0$. 
Introducing the change of variables $t=t(z)=\frac{a\rho^{1+z}}{n_i}$, we have $dt=t(z)\ln\rho\cdot dz$, and $z$ increases from $0$ to $1/2$ as $t$ increases from $\frac{a\rho}{n_i}$ to $\frac{a\rho^{3/2}}{n_i}$. Thus,
\[\mathbb{E}_{y_i}[A_i]\geq\frac{1}{2}\int_{z=0}^{\frac{1}{2}}\frac{a}{t(z)}\cdot\mathbb{I}\left[x_i(t(z))\geq\frac{a}{t(z)}\right]t(z)\ln\rho\cdot dz=\frac{a\ln\rho}{2}\int_{z=0}^{\frac{1}{2}}\mathbb{I}\left[x_i(t(z))\geq\frac{a}{t(z)}\right] dz\]
Observe that the value of the integral is precisely $\Pr_{y_i}\left[x_i(t_i)\geq\frac{a}{t_i}\text{ and }y_i\leq 1/2\right]$. We are given that $\Pr_{y_i}\left[x_i(t_i)<\frac{a}{t_i}\right]<1/4$. Since $y_i$ is uniform in $[0,1]$, by the union bound, we get
\[\Pr_{y_i}\left[x_i(t_i)\geq\frac{a}{t_i}\text{ and }y_i\leq 1/2\right]>\frac{1}{4}\]
Substituting this, we get $\mathbb{E}_{y_i}[A_i]\geq(a\ln\rho)/8$.
\end{proof}

\end{document}